\newtheorem{mylemma}{Lemma}
\newtheorem{mytheorem}{Theorem}
\newtheorem{mycorollary}{Corollary}
\newenvironment*{myproof}{\proof}{\endproof}
\renewcommand{\maketag@@@}[1]{\hbox{\m@th\normalsize\normalfont#1}}%
\begin{document}

\title{Mean Age of Information in Partial Offloading Mobile Edge Computing Networks}

\author{Ying Dong, Hang Xiao, Haonan Hu,~\IEEEmembership{Member,~IEEE}, Jiliang Zhang, ~\IEEEmembership{Senior Member,~IEEE}, Qianbin Chen,~\IEEEmembership{Senior Member,~IEEE}, Jie Zhang,~\IEEEmembership{Senior Member,~IEEE}
       
\thanks{This work was supported in part by the National Natural Science Foundation of China (NSFC) under Grant 61901075, in part by the UKRI RISIR project under Grant no. 101109294, in part by the Science and Technology Research Program of Chongqing Municipal Education Commission under Grant no. KJZDK202200604, in part by the MSCA IPOSEE project under Grant no. 101086219. \textit{(Ying Dong and Hang Xiao contributed equally to this
 work.) (Corresponding author: Haonan Hu.)}

Ying Dong, Hang Xiao, Haonan Hu, and Qianbin Chen are with the School of Communication and Information Engineering, Chongqing University of Posts and Telecommunications, Chongqing 400065, China. (email: yingd.cqupt@qq.com; cqupt\_xiaohang@foxmail.com; huhn@cqupt.edu.cn; chenqb@cqupt.edu.cn)

Jiliang Zhang is with the College of Information Science and Engineering, Northeastern University, Shenyang 110819, China (e-mail: zhangjiliang1@mail.neu.edu.cn).
		
Jie Zhang is with the Department of Electronic and Electrical Engineering, the University of Sheffield, S1 3JD Sheffield, U.K. (email: jie.zhang@sheffield.ac.uk)
}} 
\maketitle

\begin{abstract} 
The age of information (AoI) performance analysis is essential for evaluating the information freshness in the large-scale mobile edge computing (MEC) networks. This work proposes the earliest analysis of the mean AoI (MAoI) performance of large-scale partial offloading MEC networks. Firstly, we derive and validate the closed-form expressions of MAoI by using queueing theory and stochastic geometry. Based on these expressions, we analyse the effects of computing offloading ratio (COR) and task generation rate (TGR) on the MAoI performance and compare the MAoI performance under the local computing, remote computing, and partial offloading schemes. The results show that by jointly optimising the COR and TGR, the partial offloading scheme outperforms the local and remote computing schemes in terms of the MAoI, which can be improved by up to 51\% and 61\%, respectively. This encourages the MEC networks to adopt the partial offloading scheme to improve the MAoI performance.
\end{abstract}

\begin{IEEEkeywords}
Age of information, mobile edge computing, partial offloading, queueing theory, stochastic geometry.
\end{IEEEkeywords}

\section{Introduction}
\IEEEPARstart{W}{ith} 
the rapid development of Internet of Things (IoT) and artificial intelligence (AI), the freshness of information has become critical for latency-sensitive applications, such as autonomous driving, telemedicine, and augmented reality (AR). Decisions made by latency-sensitive applications based on outdated information may not be applicable. To characterise the information freshness, the age of information (AoI) has recently been proposed, which is defined as the time duration between the generation time of the latest computation-intensive task successfully received by the receiver and the current time \cite{ref1}.
Different from the traditional latency performance, the AoI is related to the task generation time at the source node and the queueing and processing time at the service node, which are irrelevant to the latency performance. However, user equipments (UEs) usually have limited computing capability, which increases the queueing and processing time, thus leading to an undesirable high AoI. In order to tackle the problem, the European Telecommunications Standard Institute (ETSI) proposes the concept of mobile edge computing (MEC) \cite{ref2}.
In the MEC networks, edge servers, such as base station (BS) or access point (AP), are usually deployed at the edge of networks with abundant computing resources. By offloading the computation-intensive tasks from UEs to edge servers, the queueing and processing time can be effectively reduced, thus improving the AoI performance \cite{ref3}. 
However, if all computation-intensive tasks are offloaded to edge servers, the AoI performance may not be improved due to limited spectrum resources, which increases the transmission delay. Therefore, how to determine the computing offloading scheme in the MEC networks has become a crucial problem to improve the AoI performance \cite{ref4}.

Some existing works studied the AoI performance in the MEC networks \cite{ref5,ref6,ref7,ref8,ref9}.
In \cite{ref5}, the mean AoI (MAoI), which is defined as the average value of AoI for the computation-intensive tasks in a given time interval, was investigated under both local and remote computing schemes. The results showed that the local computing scheme can achieve lower MAoI than the remote computing scheme when the size of tasks is large.
Based on this work, the partial offloading scheme has been proposed in \cite{ref6} to further improve the MAoI performance in the MEC networks. The closed-form results of MAoI under the partial offloading scheme were derived. The results showed that the MAoI under the partial offloading scheme can be significantly reduced compared to both local and remote computing schemes in \cite{ref5}, and there exists an optimal computing offloading ratio (COR), defined as the proportion of tasks offloaded to the edge server, to minimise the MAoI.
Moreover, the joint effects of COR and task generation rate (TGR), which is defined as the arrival rate of UE-generated tasks, on the MAoI performance in the partial offloading MEC networks were investigated in \cite{ref7,ref8,ref9}.
Works \cite{ref7} and \cite{ref8} compared the MAoI performance under the partial offloading scheme with joint optimisation of COR and TGR versus only COR optimisation. The results showed that the MAoI performance can be further improved by the joint optimisation of COR and TGR.
Work \cite{ref9} proposed a status update algorithm based on reinforcement learning to reduce the MAoI in the vehicular edge networks, where the roadside units are considered as edge servers, by jointly adjusting the COR and TGR. 
However, the tasks considered in the aforementioned works are code-partitioned rather than data-partitioned \cite{ref10}. Data-partitioned tasks can be divided into several independent sub-tasks for parallel computing, while code-partitioned tasks can only be computed in a specific sequence \cite{ref11}. With the growing interest in distributed machine learning, the AoI performance analysis with data-partitioned tasks has become critical, and existing models adopted in the aforementioned works cannot be applied. Moreover, all the above works only considered small-scale partial offloading MEC networks which consist of a limited number of edge servers and UEs. The effects of COR and TGR on the MAoI performance in large-scale partial offloading MEC networks are not yet known.

Some works studied the AoI performance in large-scale networks based on stochastic geometry \cite{ref12,ref13,ref14,ref15,ref16}. 
Work \cite{ref12} investigated the MAoI performance in the large-scale privacy preservation-oriented mobile crowd-sensing networks based on the Poisson point process (PPP). The results showed that the AoI for privacy-preserving data is higher compared to that for unencrypted sensing data. In \cite{ref13}, the MAoI performance was investigated in the wireless powered stochastic energy harvesting networks, where the power beacons and energy harvesting nodes follow the PPP and binomial point process, respectively. Its results showed that a lower MAoI can be achieved by optimising TGR when the successful transmission probability (STP), which is defined as the probability that the signal-to-interference ratio (SIR) of the destination node receiving the data packets from the source node is greater than a threshold, is relatively small. Work \cite{ref14} investigated the MAoI performance in the cellular-connected unmanned aerial vehicle (UAV) networks based on the PPP. The results showed that there exists an optimal UAV altitude for minimising the MAoI in the cellular-connected UAV networks.  
However, the clustered nature of UEs has been ignored. In our previous work \cite{ref15}, the MAoI performance of large-scale cellular IoT networks was analysed based on the Poisson cluster process, but this work focuses on the impact of data compression in the IoT devices on the MAoI, ignoring the effect of computing offloading strategy on the MAoI in the MEC network. To our best knowledge, work \cite{ref16} is the only work investigating the MAoI performance in the large-scale vehicular networks with computing offloading. However, it only studied the MAoI performance under the remote offloading strategy, i.e., offloading all computation-intensive tasks to the edge server, where the impact of partial offloading strategy is missing. Furthermore, there lacks of literatures analysing the joint effects of COR and TGR on the MAoI performance in the partial offloading MEC networks.  
Accordingly, the motivations of our work can be summarised as follows:
\begin{enumerate}
\item The AoI performance analysis in the large-scale partial offloading MEC networks is missing in the literature. The partial offloading scheme has only been investigated in the small-scale MEC networks, which consist of a limited number of edge servers and UEs. How this scheme performs in the large-scale MEC networks is still unknown.
\item The existing queueing models may no longer be applicable to the large-scale partial offloading MEC networks. 
\item There lacks of literatures investigating the joint effects of COR and TGR in the large-scale partial offloading MEC networks.
\end{enumerate}

To our best knowledge, this paper is the earliest analysis on the MAoI performance in the large-scale partial offloading MEC networks. Based on the Jackson network in queueing theory, we propose a new queueing model for the partial offloading scheme with data-partitioned tasks. Equipped with these, the closed-form expressions of MAoI in the large-scale MEC networks under the local computing, remote computing, and partial offloading schemes are derived and validated by Monte Carlo simulations. Then, we numerically analyse the effects of COR and TGR on the MAoI performance and compare the MAoI performance under local computing, remote computing, and partial offloading schemes. Our main contributions are summarised as follows:
\begin{enumerate}
	\item We proposes the earliest analysis of the MAoI performance in the large-scale partial offloading MEC networks. By combining queueing theory and stochastic geometry, the closed-form expressions of MAoI in the large-scale partial offloading MEC networks are derived and validated through Monte Carlo simulations.
	\item To obtain the closed-form expressions of MAoI in the large-scale partial offloading MEC networks, we adopt the Jackson network in queueing theory to describe the partial offloading scheme with data-partitioned tasks.
	\item The results show that by jointly optimising COR and TGR, the MAoI can be significantly reduced. Under our simulation environment, the MAoI under the partial offloading scheme with the optimal COR and TGR can be reduced by up to 51\% and 61\% compared to the local and remote computing schemes, respectively. Consequently, the partial offloading scheme can be widely adopted in the large-scale MEC networks to improve the MAoI performance. Additionally, the proposed partial offloading scheme can alleviate the requirement on the computing capability of UEs.
\end{enumerate}

The remainder of this article is organised as follows. Section \ref{sec_two} introduces the network model, the computing offloading model, and the investigated performance metrics. In Section \ref{sec_three}, the analytical results of MAoI in the large-scale MEC networks are derived. In Section \ref{sec_four}, the validation results and the numerical analysis are presented before concluding the paper in Section \ref{sec_five}.

\section{System Model}\label{sec_two}
\subsection{Network Model}
We consider a large-scale partial offloading MEC network consisting of UEs and BSs. Each UE consists of three components: a sensor that generates the computation-intensive tasks, a local processor that processes these tasks, and a transmitter that offloads these tasks to its associated BS via wireless channels. Each BS is equipped with an edge server that provides edge computing resources to process the offloaded tasks and then transmits the computing results back to the corresponding UE for decision-making. 
The positions of BSs and UEs are modelled following a Matern cluster process (MCP) \cite{ref17}. The locations of BSs follow a PPP with density $\lambda_{\rm{B}}$, denoted by $\{\boldsymbol{{\rm{y}}}_i\}$, $i \in (1,2,\cdots,N_{\rm{B}})$ with $N_{\rm{B}}$ being the number of BSs in the partial offloading MEC network. The locations of UEs are uniformly scattered in a circular region centered around their associated BSs with radius being $1/\sqrt{\pi\lambda_{\rm{B}}}$ \cite{ref18}. The locations of UEs are represented by $\{\boldsymbol{{\rm{x}}}_{i,j}\}$, $j \in (1,2,\cdots,N_i)$, where $N_i$ is the number of UEs in the BS $\boldsymbol{{\rm{y}}}_{i}$. 

For each link between the UE and its associated BS, we assume that it experiences pathloss and small-scale fading. 
For the pathloss, a log-distance model is assumed, which is expressed by $l(r)=r^{-\alpha}$ \cite{ref19}, where $r$ is the distance between the UE and its associated BS, and $\alpha$ is the pathloss exponent. 
For the small-scale fading, Rayleigh fading is assumed. Therefore, the received power attenuation can be modelled as an independent exponential distribution with rate parameter being 1. 
We further assume that the uplink transmit power employs a fraction channel inversion power control mechanism \cite{ref20}. Consequently, for a specific UE $\boldsymbol{{\rm{x}}}_{i,j}$ associated with the BS $\boldsymbol{{\rm{y}}}_{i}$, the uplink transmit power $P_{i,j}$ can be expressed by $P_{i,j}=pl(R_{i,j})^{-\epsilon}$, where $p$ is the fixed transmit power, $R_{i,j}$ is the distance between the UE $\boldsymbol{{\rm{x}}}_{i,j}$ and its associated BS $\boldsymbol{{\rm{y}}}_{i}$, and $\epsilon$ is the power control factor. 

According to Slivnyak's theorem \cite{ref21}, we deploy a typical BS at the origin without loss of generality. We assume that a group of UEs served by the same BS uses orthogonal frequency resources \cite{ref22}. Therefore, for the typical BS, there is only an interfering UE in each BS. 
For denotational simplicity, $\{R_i\}$ and $\{D_i\}$ denote the distances from the interfering UEs to their associated BSs and the typical BS, respectively. 
In particular, the distance between a specific UE associated with the typical BS and the typical BS is denoted by $R_0$. 
As a result, the received SIR at the typical BS can be expressed by 
\begin{equation}\label{eqn-1} 
	{\rm{SIR}} = \frac{{{h_0}{R_0}^{ - \alpha }pl{{({R_0})}^{ -\epsilon }}}}{{\sum\nolimits_{\{\Phi_i\}} {{h_i}{D_i}{^{ - \alpha }}pl{{({R_i})}^{ -\epsilon }}} }},
\end{equation}
where $\Phi_i$ is the set of interfering UEs. $h_0$ and $h_i$ are the small-scale fading from the specific UE and the $i$-th interfering UE to the typical BS, respectively. Since this work focuses on the MAoI performance in the large-scale partial offloading MEC networks, it is assumed that the BSs can obtain perfect channel state information (CSI) via channel estimation on pilot signals from the UEs \cite{ref23}. 

Equipped with the SIR, if a computation-intensive task is successfully offloaded from the UE $\boldsymbol{{\rm{x}}}_{i,j}$ to its associated BS $\boldsymbol{{\rm{y}}}_{i}$, the average offloading rate $\bar{C}^{\rm{tr}}_{i,j}$ can be expressed by $B_{i,j}\log_2(1+\tau)\Theta_{i,j}$ \cite{ref24}, where $B_{i,j}$ is the bandwidth of sub-channel. In our work, the round-robin scheduling is adopted to fairly allocate each sub-channel bandwidth \cite{ref25}. 
Thus, the sub-channels bandwidth $B_{i,j}$ can be expressed by $B_{i,j}=B_{{\rm{tot}}}/N_i$, where $B_{{\rm{tot}}}$ denotes the available total bandwidth in the BS $\boldsymbol{{\rm{y}}}_{i}$. Additionally, $\tau$ is a SIR threshold. $\Theta_{i,j}$ denotes the STP, which is defined as the probability that the SIR of BS $\boldsymbol{{\rm{y}}}_{i}$ receiving the computation-intensive tasks from the UE $\boldsymbol{{\rm{x}}}_{i,j}$ exceeds the SIR threshold $\tau$. 

\subsection{Computing Offloading Model} 
\begin{figure}[!t]
	\centering
	\includegraphics[width=2.9in]{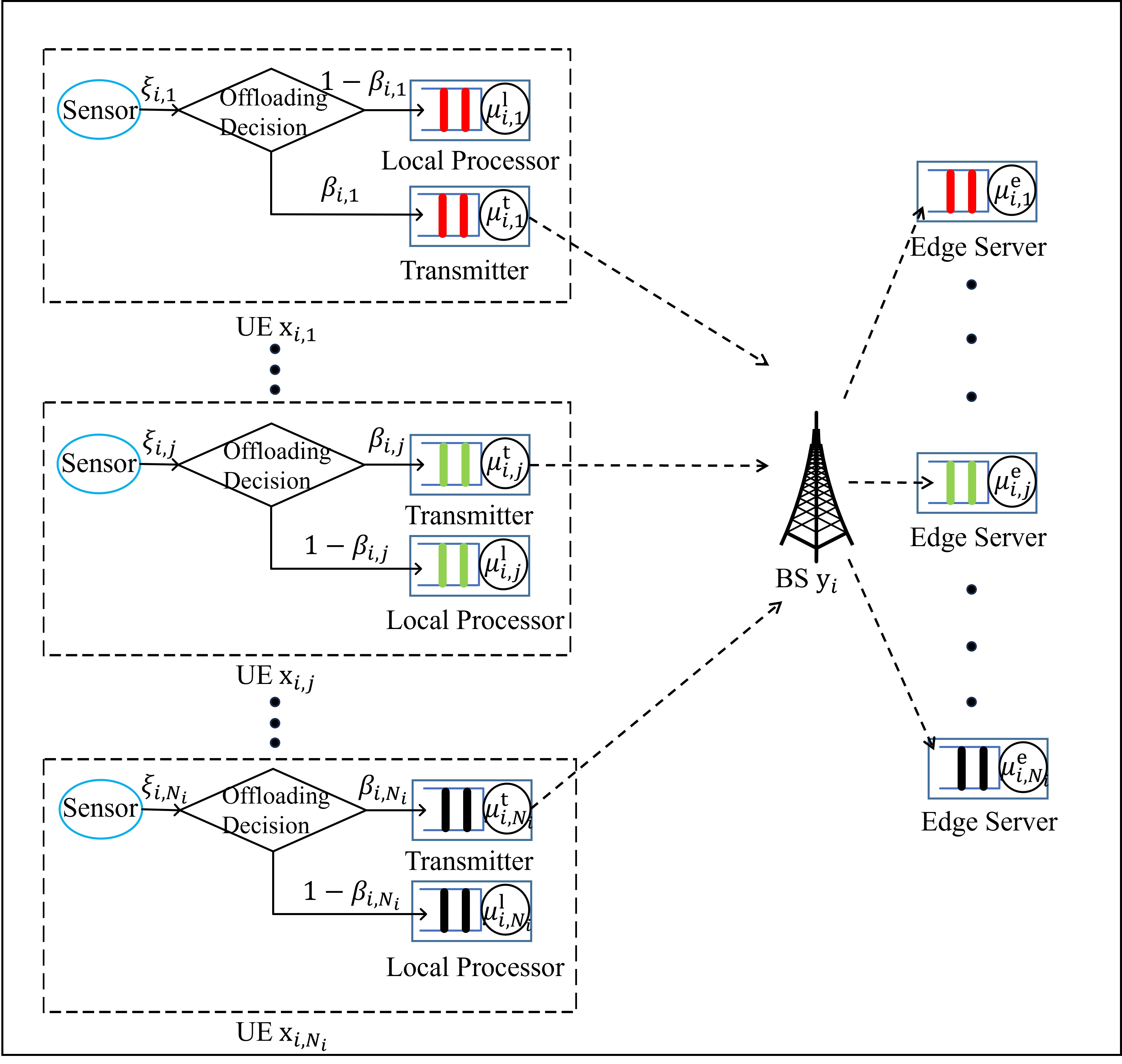}
	\caption{The computing offloading model in the large-scale partial offloading MEC network.}
	\label{fig1}
	\vspace{-5pt}
\end{figure}
As shown in Fig. \ref{fig1}, the entire procedure of computing offloading is described as follows. 
Firstly, the computation-intensive tasks are generated at each sensor of UE. For a specific UE $\boldsymbol{{\rm{x}}}_{i,j}$ associated with the BS $\boldsymbol{{\rm{y}}}_i$, we assume that the size of tasks follows an independently and identically distributed exponential distribution with a fixed mean $L_{i,j}$. Meanwhile, the task generation is assumed to be event-triggered \cite{ref26}, i.e., the sensor of the UE detects the occurrence of a certain event and then generates a computation-intensive task. The generated task is often random. Therefore, we model the task arrival process as a Poisson process with rate parameter $\xi_{i,j}$ \cite{ref27}. 
Secondly, based on the computing capability of the UE, the UE determines whether this task can be offloaded and whether to fully or partially offload the task to its associated BS. Since information about the computing capability of UEs can be easily obtained, we assume that the time consumed in this decision process can be ignored \cite{ref19}.

In this work, we consider three computing offloading schemes based on the proportion of computation-intensive tasks being processed at the UE, namely the local computing, remote computing, and partial offloading schemes. For the local computing scheme, all computation-intensive tasks are processed by the local processor of the UE. For the remote computing scheme, all computation-intensive tasks are first offloaded to the BS and then processed by the edge server of the BS. For the partial offloading scheme, some of the computation-intensive tasks are processed by the local processor of the UE, while the rest of them are processed by the edge server of the BS. 
We assume that the computation-intensive tasks are offloaded and processed following a first-come-first-served (FCFS) queueing discipline and that the buffers of the transmitters, local processors, and edge servers have infinite sizes. Additionally, it is assumed that the unsuccessful offloaded tasks will not be re-offloaded \cite{ref28}.

Three computing offloading schemes are described in details as follows.
\subsubsection{Local Computing Scheme}
In this scheme, all computation-intensive tasks are processed by the local processor of the UE $\boldsymbol{{\rm{x}}}_{i,j}$. Recall that the task arrival process follows the Poisson process with rate $\xi_{i,j}$ and the task size follows an exponential distribution with mean $L_{i,j}$. Thus, the local computing process can be modelled as the M/M/1 queue with the arrival rate being $\xi_{i,j}$ and a constant service rate being $\kappa_{i,j}$. Meanwhile, the average service delay $G_{i,j}$ at the local processor of the UE $\boldsymbol{{\rm{x}}}_{i,j}$ can be expressed by
\begin{equation}\label{eqn-2}
	G_{i,j}=\frac{C_{i,j}L_{i,j}}{f_{i,j}},
\end{equation} 
where $C_{i,j}$ is the required CPU cycles to process one bit data, and $f_{i,j}$ is the computing frequency at local processor of UE $\boldsymbol{{\rm{x}}}_{i,j}$, indicating the number of CPU cycles that can be provided per second. Based on the Utilization law \cite{ref29}, the service rate is inversely proportional to the average service time for a single server. Accordingly, the service rate $\kappa_{i,j}$ at the local processor of the UE $\boldsymbol{{\rm{x}}}_{i,j}$ can be expressed by $\kappa_{i,j}=1/{G_{i,j}}$.

\subsubsection{Remote Computing Scheme}
In this scheme, the computation-intensive tasks generated by the UE $\boldsymbol{{\rm{x}}}_{i,j}$ are offloaded to its associated BS $\boldsymbol{{\rm{y}}}_{i}$ by the transmitter and then processed by the edge server of the BS $\boldsymbol{{\rm{y}}}_{i}$. 
Similar to the local computing process, the offloading process can be modelled as the M/M/1 queue with the arrival rate being $\xi_{i,j}$ and the service rate being $\upsilon_{i,j}$. 
The average offloading delay $K_{i,j}$ at the transmitter of the UE $\boldsymbol{{\rm{x}}}_{i,j}$ can be expressed by
\begin{equation}\label{eqn-3}
	K_{i,j} = \frac{{{L_{i,j}}}}{{{\bar{C}_{i,j}^{\rm{tr}}}}}= \frac{{{L_{i,j}}}}{{{B_{i,j}\log_2(1+\tau)\Theta_{i,j}}}}.
\end{equation}
Then, the service rate $\upsilon_{i,j}$ at the transmitter of the UE $\boldsymbol{{\rm{x}}}_{i,j}$ can be obtained by $1/K_{i,j}$.

According to Burke's theory \cite{ref30}, the output of the M/M/1 queue is still a Poisson process with the same arrival rate. Therefore, we model the computing process of the edge server as the M/M/1 queue with the arrival rate $\xi_{i,j}$ and the service rate being $\varepsilon_{i,j}$. Moreover, we assume that the computing frequency of the BS $\boldsymbol{{\rm{y}}}_{i}$ are divided into multiple independent units as virtual machines (VMs) through network function virtualisation (NFV) technology \cite{ref22},\cite{ref31}. Each VM only processes the computation-intensive tasks from the corresponding UE. As a result, the sum of the computing frequency $g_{i,j}$ allocated to each UE in the BS $\boldsymbol{{\rm{y}}}_{i}$ cannot exceed the computing frequency $f^{\rm{B}}_{i}$ of the BS $\boldsymbol{{\rm{y}}}_{i}$, i.e., $\sum\limits_{j = 1}^{j={N_i}} {g_{i,j}} \le f_i^{\rm{B}}$. Consequently, the average service delay $H_{i,j}$ at the edge server can be expressed by
\begin{equation}\label{eqn-4}
	H_{i,j} = \frac{{{C_{i,j}}{L_{i,j}}}}{{g_{i,j}}}.
\end{equation}
Then, the service rate $\varepsilon_{i,j}$ at the edge server can be obtained by $1/H_{i,j}$.

\subsubsection{Partial Offloading Scheme}
In this scheme, the computation-intensive tasks generated by the UE $\boldsymbol{{\rm{x}}}_{i,j}$ are divided into two independent parts, i.e., $\beta_{i,j}$ and $(1-\beta_{i,j})$, where $\beta_{i,j}\in(0,1)$ is the COR. As illustrated in Fig. \ref{fig1}, a portion $(1-\beta_{i,j})$ of computation-intensive tasks is processed by the UE $\boldsymbol{{\rm{x}}}_{i,j}$, while the rest $\beta_{i,j}$ of them are processed by the BS $\boldsymbol{{\rm{y}}}_{i}$ for parallel computing. Based on queueing theory, we model the partial offloading process as a Jackson network, which consists of one M/M/1 queue described in the local computing scheme and two M/M/1 tandem queues described in the remote computing scheme. According to the independent thinning property of Poisson process, the arrivals of computation-intensive tasks at the local processor and transmitter of the UE $\boldsymbol{{\rm{x}}}_{i,j}$ follow the Poisson processes with rates $(1-\beta_{i,j})\xi_{i,j}$ and $\beta_{i,j}\xi_{i,j}$, respectively. Accordingly, for local computing in the partial offloading scheme, the average service delay $S_{i,j}^{\rm{l}}$ at the local processor of the UE $\boldsymbol{{\rm{x}}}_{i,j}$ can be expressed by $S_{i,j}^{\rm{l}}=(1-\beta_{i,j})G_{i,j}$. 
Consequently, the service rates $\mu_{i,j}^{\rm{l}}$ of local processor can be calculated by $1/S_{i,j}^{\rm{l}}$. For remote computing in the partial offloading scheme, the average service delay $S_{i,j}^{\rm{t,e}}$ consists of the average offloading delay $S_{i,j}^{\rm{t}}$ at the transmitter of UE $\boldsymbol{{\rm{x}}}_{i,j}$ and the average service delay $S_{i,j}^{\rm{e}}$ at the edge server of BS $\boldsymbol{{\rm{y}}}_{i}$, which can be expressed by $S_{i,j}^{\rm{t}}=\beta_{i,j} K_{i,j}$ and $S_{i,j}^{\rm{e}}=\beta_{i,j}H_{i,j}$, respectively. As a result, under the partial offloading scheme, the service rate $\mu_{i,j}^{\rm{t}}$ of transmitter and the service rate $\mu_{i,j}^{\rm{e}}$ of edge server can be calculated by $1/S_{i,j}^{\rm{t}}$ and $1/S_{i,j}^{\rm{e}}$, respectively.  

Note that the local and remote computing schemes can be considered as special cases of the partial offloading scheme. In the partial offloading scheme, $\beta_{i,j}=0$ corresponds to the local computing scheme, while $\beta_{i,j}=1$ corresponds to the remote computing scheme.

After the computing offloading process, the computing results at the BS are transmitted back to the corresponding UE for decision-making. Since the size of computing results is limited, the delay in transmitting the computing results can be ignored in our work \cite{ref6}.

\subsection{Investigated Performance Metrics}
In this work, we focus on investigating the AoI performance in the large-scale partial offloading MEC networks. Recall that the AoI is defined as the time elapsed between the generation time of the latest completely processed computation-intensive task received by the UE $\boldsymbol{{\rm{x}}}_{i,j}$ and the current time. Thus, the AoI $\Delta_{i,j}(t)$ at the UE $\boldsymbol{{\rm{x}}}_{i,j}$ can be expressed by $\Delta_{i,j}(t) = t - t_{z}$ \cite{ref1}, where $t$ denotes the current time. $t_{z}$ denotes the generation time of the most recent completely processed computation-intensive task index $z$ received by the UE $\boldsymbol{{\rm{x}}}_{i,j}$. However, as the AoI is a stochastic process, it is difficult to obtain the closed-form expression of $\Delta_{i,j}(t)$. Therefore, we adopt the MAoI, which is defined as the temporal average AoI in a given time interval, as the performance metrics to evaluate the information freshness in the large-scale partial offloading MEC networks. As a result, the MAoI $\bar\Delta _{i,j}$ of the UE $\boldsymbol{{\rm{x}}}_{i,j}$ in the time interval $(0,\mathcal{L})$ can be expressed by
\begin{equation}\label{eqn-7}
	\bar \Delta _{i,j}  = \mathop {\lim }\limits_{\mathcal{L}  \to \infty }\frac{1}{\mathcal{L}} \int_0^\mathcal{L}  {\Delta_{i,j} (t){\rm{d}}t}.
\end{equation}

The derivations of MAoI in the large-scale partial offloading MEC networks will be provided in the next section.

\section{Performance Analysis}\label{sec_three}
In this section, the STP of the computation-intensive tasks in the large-scale partial offloading MEC networks will be given first. Then, the closed-form expressions of MAoI under the local computing, remote computing, and partial offloading schemes will be derived. Based on these expressions, the MAoI performance in the large-scale partial offloading MEC networks can be analysed.

\subsection{Successful Transmission Probability}
The STP $\Theta_{i,j}$ is defined as the probability that the SIR of the link from the UE $\boldsymbol{{\rm{x}}}_{i,j}$ to the BS $\boldsymbol{{\rm{y}}}_{i}$ is greater than a SIR threshold $\tau$. Therefore, the STP $\Theta_{i,j}$ can be expressed by
\begin{equation}\label{theta}
	\Theta_{i,j}={\rm{P}}\left({\rm{SIR}}>\tau\right).
\end{equation}

Based on our previous work \cite{ref24}, the closed-form result of the STP in large-scale clustered networks modelled by the MCP has been investigated. Since the spatial distributions of BSs and UEs in our work also follow the MCP, the STP in the large-scale partial offloading MEC networks can be obtained similarly, which is given in Lemma \ref{lemmastp}. 
\begin{mylemma}\label{lemmastp}
In the large-scale partial offloading MEC networks with the spatial distributions of UEs and BSs following the MCP, the STP $\Theta_{i,j}$ can be expressed by
	\begin{equation}\label{eqn-stp}
		{\Theta _{i,j}} = \left\{ \begin{aligned}
			&{\bf{E}}_{\frac{\epsilon}{\epsilon-1}}\left( \varsigma  \right) + {\varsigma}^{\frac{1}{{1 - \epsilon}}} \Gamma \left( {1 + \frac{1}{{1 - \epsilon}}} \right),& 0 &\le  \epsilon < 1\\
			&\exp \left( { - \varsigma } \right),& \epsilon&= 1,
		\end{aligned} \right.
	\end{equation}
where $\varsigma=\frac{2\pi\tau^{\frac{2}{\alpha}}}{\alpha(1+\epsilon)\sin\left(\frac{2\pi}{\alpha}\right)}$ and ${\bf{E}}_{\frac{\epsilon}{\epsilon-1}}\left( \varsigma  \right)=\int_1^\infty \frac{e^{ - \varsigma t}}{t^{\frac{\epsilon}{\epsilon-1}}} {\rm{d}}t$.
\end{mylemma}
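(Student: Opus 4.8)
The plan is to derive $\Theta_{i,j}$ directly from the channel and point-process models, along the lines of \cite{ref24}. By Slivnyak's theorem one places the typical BS at the origin, so that $\Theta_{i,j}={\rm P}({\rm SIR}>\tau)$ is a spatial average. Substituting $l(r)=r^{-\alpha}$ into \eqref{eqn-1}, the fixed power $p$ cancels between numerator and denominator and the SIR reduces to $h_0 R_0^{-\alpha(1-\epsilon)}\big/\sum_i h_i D_i^{-\alpha}R_i^{\alpha\epsilon}$. Since $h_0$ is unit-mean exponential (Rayleigh fading), conditioning on $R_0$ and on the interferers gives ${\rm P}({\rm SIR}>\tau\mid\cdot)=\exp(-\tau R_0^{\alpha(1-\epsilon)}\sum_i h_i D_i^{-\alpha}R_i^{\alpha\epsilon})$, so $\Theta_{i,j}$ is the expectation of the Laplace transform of the aggregate interference.

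Next I would describe the interfering-UE process. Each BS of the $\lambda_{\rm B}$-PPP contributes exactly one co-channel interferer, displaced uniformly inside the cluster disk of radius $r_d=1/\sqrt{\pi\lambda_{\rm B}}$; by the displacement theorem the interfering UEs therefore form a PPP of intensity $\lambda_{\rm B}$, independently marked by the distance-to-serving-BS $R_i$ with density $2r/r_d^2$ on $[0,r_d]$, and with $D_i$ independent of $R_i$. Averaging over the i.i.d.\ fading and then applying the probability generating functional of this marked PPP, the interference Laplace transform becomes $\exp(-\lambda_{\rm B}\mathcal{I})$ with $\mathcal{I}=\int_{\mathbb{R}^2}\big(1-{\rm E}_{R}[(1+c\lvert z\rvert^{-\alpha})^{-1}]\big)\,{\rm d}z$ and $c=\tau R_0^{\alpha(1-\epsilon)}R^{\alpha\epsilon}$. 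Passing to polar coordinates, the radial integral $\int_0^\infty c\,v\,(v^{\alpha}+c)^{-1}\,{\rm d}v$ is evaluated for $\alpha>2$ via $w=v^{\alpha}$ and $\int_0^\infty w^{s-1}(w+c)^{-1}\,{\rm d}w=\pi c^{\,s-1}/\sin(\pi s)$ at $s=2/\alpha$; taking the remaining expectation over $R$ via $\int_0^{r_d}r^{2\epsilon}\,2r\,r_d^{-2}\,{\rm d}r=r_d^{2\epsilon}/(\epsilon+1)$ then collapses the whole expression to $\exp(-b\,R_0^{2(1-\epsilon)})$ for an explicit constant $b$ which, crucially, satisfies $b\,r_d^{2(1-\epsilon)}=\varsigma$.

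The final step de-conditions on $R_0$, which for the typical UE is also uniform in the cluster disk with density $2r/r_d^2$. For $\epsilon=1$ the exponent no longer depends on $R_0$, giving $\Theta_{i,j}=e^{-\varsigma}$ at once. For $0\le\epsilon<1$, the substitution $s=b\,r^{2(1-\epsilon)}$ turns $\int_0^{r_d}e^{-b r^{2(1-\epsilon)}}\,2r\,r_d^{-2}\,{\rm d}r$ into $\tfrac{1}{1-\epsilon}\varsigma^{-1/(1-\epsilon)}\int_0^{\varsigma}e^{-s}s^{\epsilon/(1-\epsilon)}\,{\rm d}s$, i.e.\ a lower incomplete Gamma integral; then $\gamma(a,\cdot)=\Gamma(a)-\Gamma(a,\cdot)$, the recursion $\Gamma(1+a)=a\Gamma(a)$, and the identity ${\bf E}_{\nu}(x)=x^{\nu-1}\Gamma(1-\nu,x)$ with $\nu=\tfrac{\epsilon}{\epsilon-1}$ (so $1-\nu=\tfrac{1}{1-\epsilon}$) rearrange this into the stated combination of ${\bf E}_{\epsilon/(\epsilon-1)}(\varsigma)$ and a Gamma term.

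I expect the middle paragraph to be the main obstacle: threading the power-control exponent $\epsilon$ correctly through both the desired signal ($R_0^{-\alpha(1-\epsilon)}$) and the interferer marks ($R_i^{\alpha\epsilon}$), and carrying out the nested integrals — over the plane, then over the mark $R_i$, then over fading — so that the interference Laplace transform collapses to a single exponential in $R_0^{2(1-\epsilon)}$ with exactly the constant for which $b\,r_d^{2(1-\epsilon)}=\varsigma$. Once that identity is in hand, the $R_0$-averaging and the incomplete-Gamma-to-${\bf E}$ bookkeeping are routine, with $\epsilon=1$ emerging naturally as the degenerate boundary case.
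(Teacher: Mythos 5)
The paper's own ``proof'' of this lemma is a one-line deferral to Appendix~B of \cite{ref24}; no derivation is given in the present text. Your proposal therefore does strictly more: it reconstructs the stochastic-geometry argument that the cited appendix presumably carries out, and the reconstruction is sound. The chain --- Slivnyak's theorem, cancellation of $p$, the exponential-fading step turning the STP into the Laplace transform of the aggregate interference, the displacement theorem making the co-channel interferers a PPP of intensity $\lambda_{\rm B}$ with i.i.d.\ uniform-in-disk marks $R_i$, the PGFL, the $\pi c^{s-1}/\sin(\pi s)$ radial integral at $s=2/\alpha$, the mark average $r_d^{2\epsilon}/(\epsilon+1)$, and the pivotal identity $b\,r_d^{2(1-\epsilon)}=\varsigma$ with $r_d^2=1/(\pi\lambda_{\rm B})$ --- is exactly the standard route, and each step checks out, including the $\epsilon$-bookkeeping in $c=\tau R_0^{\alpha(1-\epsilon)}R^{\alpha\epsilon}$. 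One caveat is worth recording about your final ``rearrangement'' step: carried to the end, your calculation gives
\[
\Theta_{i,j}=\frac{1}{1-\epsilon}\,\varsigma^{-\frac{1}{1-\epsilon}}\,\gamma\!\left(\frac{1}{1-\epsilon},\varsigma\right)
=\varsigma^{\frac{1}{\epsilon-1}}\,\Gamma\!\left(1+\frac{1}{1-\epsilon}\right)-\frac{1}{1-\epsilon}\,{\bf E}_{\frac{\epsilon}{\epsilon-1}}\!\left(\varsigma\right),
\]
where $\gamma(\cdot,\cdot)$ is the lower incomplete Gamma function. This expression reduces to $(1-e^{-\varsigma})/\varsigma$ at $\epsilon=0$ and tends to $e^{-\varsigma}$ as $\epsilon\to1^{-}$, matching the lemma's second branch, as it must by continuity in $\epsilon$. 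The lemma's displayed first branch --- a plus sign on the ${\bf E}$ term, exponent $+\tfrac{1}{1-\epsilon}$ on $\varsigma$, and no $\tfrac{1}{1-\epsilon}$ prefactor --- diverges as $\epsilon\to1^{-}$ and already fails at $\epsilon=0$, so it appears to contain transcription errors rather than a different result. Your derivation is right where it matters; just be aware that the honest endpoint of your computation is the difference form above, not the equation as printed.
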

\begin{proof}
	The derivation of STP is detailed in Appendix B of \cite{ref24}.
\end{proof}

Based on the result given in Lemma \ref{lemmastp}, the close-form expressions of MAoI under the local computing, remote computing, and partial offloading schemes can be derived.

\subsection{Age of Information Analysis} 
\begin{figure}[t]
	\centering
	\includegraphics[width=3in]{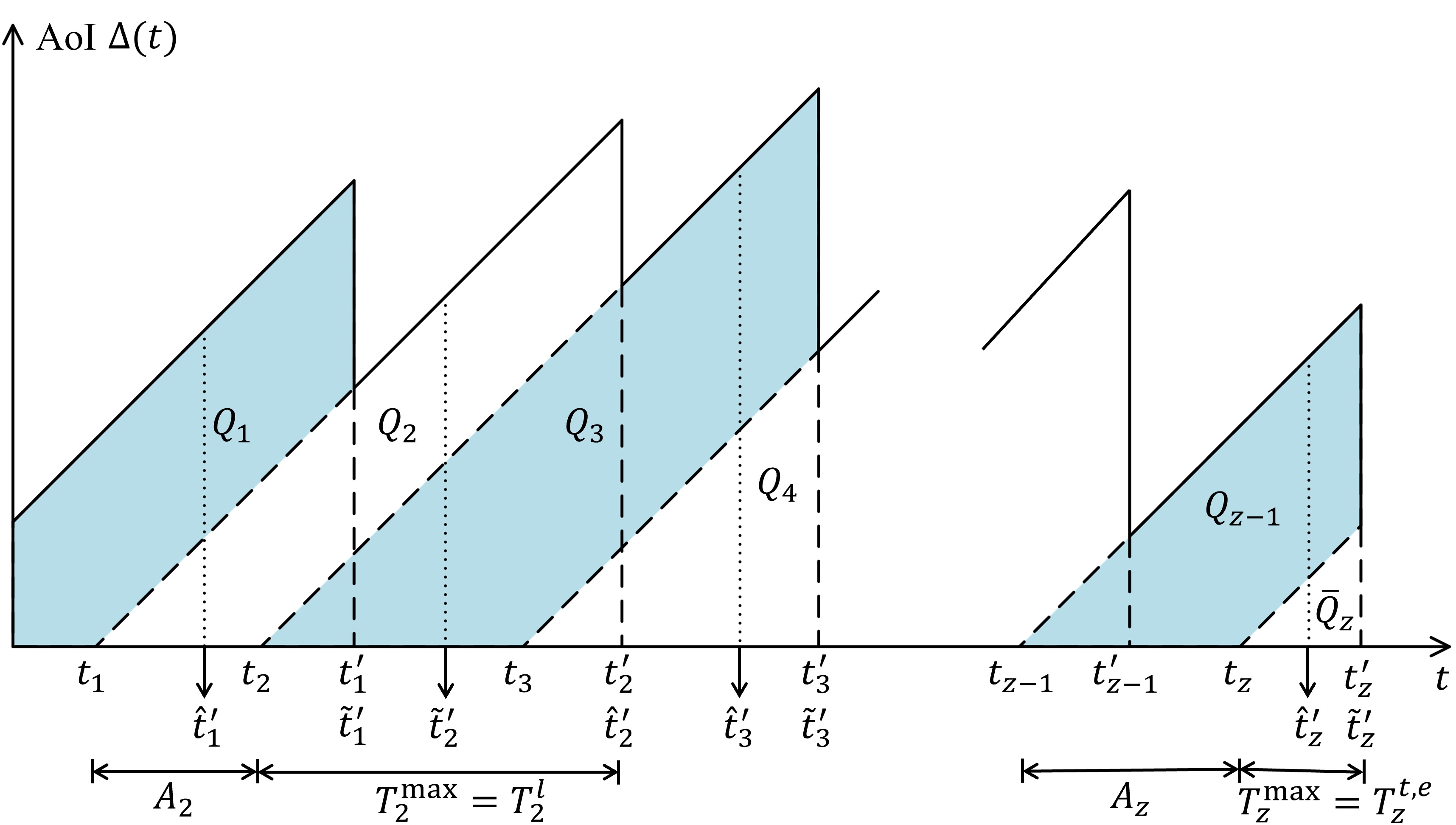}
	\caption{An example of AoI demonstration under the partial offloading scheme with the FCFS queueing discipline.}
	\label{fig2}
\end{figure}
For the partial offloading scheme, the UE $\boldsymbol{{\rm{x}}}_{i,j}$ successfully receives a complete computing result of computation-intensive task, which is obtained by integrating the computing results from the UE $\boldsymbol{{\rm{x}}}_{i,j}$ and the BS $\boldsymbol{{\rm{y}}}_{i}$. Then, the AoI decreases to the elapsed time between the generation of computation-intensive task and the reception of its complete computing result. Otherwise, the AoI increases linearly. The AoI evolution under the partial offloading scheme can be shown in Fig. \ref{fig2}. The $n$-th computation-intensive task is generated at time $t_n$ for $n\in(1,\cdots,z)$, where $z$ is the most recent index of task in the time interval $(0,\mathcal{L})$. Since the $n$-th task is divided by the UE into two independent parts, one part of the task is processed by the local processor of the UE and completed at time ${\widehat{t}}_{n}^{\prime}$, while the rest of the task is processed by the edge server of the BS and completed at the time ${\widetilde{t}}_{n}^{\prime}$. Therefore, the UE obtains a complete computing result at the time $t_n^{\prime}$, where $t_n^{\prime}=\max\{{\widehat{t}}_{n}^{\prime},{\widetilde{t}}_{n}^{\prime}\}$. According to \eqref{eqn-7}, the MAoI of the UE $\boldsymbol{{\rm{x}}}_{i,j}$ can be obtained by calculating the area under the AoI curve in a certain time interval $(0, \mathcal{L})$, where $\mathcal{L}$ is set to $t_z^{\prime}$. 
As illustrated in Fig. \ref{fig2}, this area can be considered as consisting of a polygon $Q_1$, multiple isosceles trapezoids $Q_n, n \in \{2,3,\cdots,z-1\}$, and an isosceles triangle $\bar{Q}_z$.
As a result, the MAoI $\bar \Delta _{i,j}^{{\rm{p}}}$ under the partial offloading scheme can be expressed by 
\begin{equation}\label{eqn-8}
	\bar \Delta _{i,j}^{{\rm{p}}} = \mathop {\lim }\limits_{{\cal L} \to \infty } \frac{1}{{\cal L}}\left({Q_1} + \sum_{n = 2}^{n=z - 1} {{Q_n}}  + {{\bar Q}_z}\right),
\end{equation}
which can be calculated by using the result in Lemma \ref{lemma1}.
\begin{mylemma}\label{lemma1}
In the large-scale partial offloading MEC networks, the MAoI $\bar \Delta _{i,j}^{{\rm{p}}}$ of the UE $\boldsymbol{{\rm{x}}}_{i,j}$ under the partial offloading scheme with the FCFS queueing discipline can be calculated by
	\begin{equation}\label{eqn-9}
		\begin{aligned}
			\bar \Delta _{i,j}^{{\rm{p}}}& = \frac{1}{{\mathbb{E}\left[ {{A_n}} \right]}}\bigg( { \left( {1 - {\beta _{i,j}}} \right){\rm{P}}\left( {E_n^{\rm{l}}} \right)\mathbb{E}\left[ {A_n^{\rm{l}}T_n^{\rm{l}}\left| E_n^{\rm{l}} \right.} \right]}\\
			& +{{\beta _{i,j}}{\rm{P}}\left( {E_n^{\rm{t,e}}} \right)\mathbb{E}\left[ {A_n^{\rm{t}} {T_n^{\rm{t,e}}}\left| E_n^{\rm{t,e}} \right.} \right]} +\frac{1}{2}\mathbb{E}\left[ {{A_n}^2} \right] \!\!\bigg),
		\end{aligned}
	\end{equation}
where $A_n$, $A_n^{\rm{l}}$, and $A_n^{\rm{t}}$ denote the arrival intervals between the $(n-1)$-th and $n$-th computation-intensive tasks at the sensor, the local processor, and the transmitter of the UE, respectively. $T^{\rm{l}}_n$ and $T_n^{\rm{t,e}}$ denote the system times of the $n$-th computation-intensive task for local computing and remote computing, respectively. $E_n^{\rm{l}}$ denotes the event that the system time of the $n$-th computation-intensive task for local computing exceeds that for remote computing, i.e., $T_n^{\rm{l}}>T_n^{\rm{t,e}}$. $E_n^{\rm{t,e}}$ denotes the event that the system time of the $n$-th computation-intensive task for remote computing exceeds that for local computing, i.e., $T_n^{\rm{t,e}}>T_n^{\rm{l}}$. 
\end{mylemma}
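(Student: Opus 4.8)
The plan is to carry the classical geometric ``area under the AoI sawtooth'' argument over to the two parallel service pipelines of the partial offloading scheme, and then pass from the sample path to expectations by a renewal-reward argument. First I would fix a realization: task $n$ is generated at $t_n$, its local part is completed at $\widehat t_n'$ and its offloaded part at $\widetilde t_n'$, so a complete result reaches the UE at $t_n'=\max\{\widehat t_n',\widetilde t_n'\}$ and its system time is $T_n=\max\{T_n^{\rm l},T_n^{\rm t,e}\}$ with $T_n^{\rm l}=\widehat t_n'-t_n$ and $T_n^{\rm t,e}=\widetilde t_n'-t_n$. Because each of $\{\widehat t_n'\}$ and $\{\widetilde t_n'\}$ is nondecreasing in $n$ (FCFS within each pipeline), their pointwise maximum $\{t_n'\}$ is nondecreasing too, so complete results are delivered in order and the usual decomposition applies: on $(t_{n-1}',t_n']$ the freshest completed task available to the UE is task $n-1$, so $\Delta_{i,j}(t)=t-t_{n-1}$ there, and since $\Delta_{i,j}$ has unit slope the piece $Q_n$ of the area in \eqref{eqn-8} is a right isosceles trapezoid of area $\tfrac12(A_n+T_n)^2-\tfrac12 T_{n-1}^2$, where $A_n=t_n-t_{n-1}$; the boundary shapes $Q_1$ and $\bar Q_z$ are written down the same way.

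Second, I would let $\mathcal{L}\to\infty$, equivalently $z\to\infty$. From $\mathcal{L}=t_z'=\sum_{n\le z}A_n+O(1)$ and the strong law of large numbers, $\mathcal{L}/z\to\mathbb{E}[A_n]$; the ergodic theorem applied to the stationary sequence $\{Q_n\}$ gives $\tfrac1z\big(Q_1+\sum_{n=2}^{z-1}Q_n+\bar Q_z\big)\to\mathbb{E}[Q_n]$, the two boundary terms being $O(1)$ and hence negligible after division by $\mathcal{L}$. This yields $\bar\Delta_{i,j}^{\rm p}=\mathbb{E}[Q_n]/\mathbb{E}[A_n]$, and invoking stationarity ($\mathbb{E}[T_{n-1}^2]=\mathbb{E}[T_n^2]$) to cancel the $T_{n-1}^2$ term reduces it to $\bar\Delta_{i,j}^{\rm p}=\big(\mathbb{E}[A_nT_n]+\tfrac12\mathbb{E}[A_n^2]\big)/\mathbb{E}[A_n]$.

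Third, I would resolve the maximum and reintroduce the pipeline arrival streams. Ties being null events for continuous service times, $T_n=T_n^{\rm l}$ on $E_n^{\rm l}=\{T_n^{\rm l}>T_n^{\rm t,e}\}$ and $T_n=T_n^{\rm t,e}$ on $E_n^{\rm t,e}$, so $\mathbb{E}[A_nT_n]={\rm P}(E_n^{\rm l})\,\mathbb{E}[A_nT_n^{\rm l}\mid E_n^{\rm l}]+{\rm P}(E_n^{\rm t,e})\,\mathbb{E}[A_nT_n^{\rm t,e}\mid E_n^{\rm t,e}]$. The independent thinning of the Poisson task stream into the local pipeline (rate $(1-\beta_{i,j})\xi_{i,j}$) and the offloading tandem (rate $\beta_{i,j}\xi_{i,j}$) relates the sensor-side interval $A_n$ to the pipeline-side intervals $A_n^{\rm l}$ and $A_n^{\rm t}$; substituting that relation into the two conditional cross-moments produces the prefactors $(1-\beta_{i,j})$ and $\beta_{i,j}$ and replaces $A_n$ by $A_n^{\rm l}$, respectively $A_n^{\rm t}$, which is exactly \eqref{eqn-9}.

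I expect the third step to be the main obstacle. The two per-task system times $T_n^{\rm l}$ and $T_n^{\rm t,e}$ are not independent — they share the generation epoch $t_n$, and in each M/M/1 sub-queue the system time is correlated with the relevant inter-arrival time — so inside the conditional expectations $A_n^{\rm l}$ cannot be decoupled from $T_n^{\rm l}$ or from the conditioning event, and one must propagate the joint law of $(A_n,A_n^{\rm l},A_n^{\rm t},T_n^{\rm l},T_n^{\rm t,e})$ through the thinning. This is precisely where the Jackson-network formulation pays off: Burke's theorem guarantees that the stream feeding the edge server is again Poisson, so the offloading tandem decomposes, and the conditional moments above, together with $\mathbb{E}[A_n]$ and $\mathbb{E}[A_n^2]$, become computable in closed form in the theorem that follows.
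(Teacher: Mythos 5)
Your proposal follows essentially the same route as the paper's Appendix~A: the sawtooth-area decomposition into $Q_1$, the trapezoids $Q_n=\tfrac12 A_n^2+A_nT_n^{\rm max}$, and $\bar Q_z$; the renewal/law-of-large-numbers passage to $\tfrac{1}{\mathbb{E}[A_n]}\left(\mathbb{E}[A_nT_n^{\rm max}]+\tfrac12\mathbb{E}[A_n^2]\right)$; and the split of $T_n^{\rm max}$ over $E_n^{\rm l}$ and $E_n^{\rm t,e}$ followed by the thinning-based substitution $A_n\mapsto(1-\beta_{i,j})A_n^{\rm l}$, resp. $\beta_{i,j}A_n^{\rm t}$ (your delivery-epoch trapezoid plus stationarity is an equivalent cosmetic variant of the paper's generation-epoch trapezoid). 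The step you flag as the main obstacle---rigorously justifying that substitution inside the conditional cross-moments despite the dependence between $A_n$, $A_n^{\rm l}$, $A_n^{\rm t}$ and the conditioning events---is precisely the step the paper itself passes over with a brief appeal to the law of total expectation, so your account matches the paper's argument and its level of rigor.
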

\begin{myproof}
	See Appendix \ref{lp1}.
\end{myproof}

Equipped with the result, the closed-form expression of MAoI under the partial offloading scheme is given in Theorem \ref{theorem1}.
\begin{mytheorem}\label{theorem1}
In the large-scale partial offloading MEC networks, the closed-form expression of MAoI at the UE $\boldsymbol{{\rm{x}}}_{i,j}$ under the partial offloading scheme with the FCFS policy can be calculated by
\begin{equation}\label{eqn-14}
\begin{aligned}
\bar \Delta _{i,j}^{{\rm{p}}} &= \frac{{{\xi _{i,j}}}}{{\omega _{i,j}^{{\rm{t}}}\omega _{i,j}^{{\rm{e}}}}}\bigg[ {\left( {\mu _{i,j}^{\rm{t}} - {\beta _{i,j}}{\xi_{i,j}}} \right)\left( {\mu _{i,j}^{\rm{e}} - {\beta _{i,j}}{\xi _{i,j}}} \right)\Xi _{i,j}^{1}}\\
& + \chi^{\rm{l}}_{i,j}\left( {{\gamma _{i,j}} + {\beta _{i,j}}{\xi _{i,j}}} \right)\bigg(\Xi _{i,j}^{2}+\Xi_{i,j}^{3} \\
& +  {\frac{{{\beta _{i,j}^2}{\xi _{i,j}}}}{{\Omega _{i,j}^{\rm{t,e} }}}\Xi _{i,j}^{4} + \frac{{\beta _{i,j}^2\xi_{i,j}\chi _{i,j}^{\rm{t}}}}{{\mu _{i,j}^{\rm{e}}\Omega _{i,j}^{\rm{t,e} }}}\Xi _{i,j}^{5}\bigg)} \bigg] + \frac{1}{{{\xi _{i,j}}}},
\end{aligned}
\end{equation}
where $\omega_{i,j}^{{\rm{\psi}}}\!\!=\!\!\mu_{i,j}^{\rm{l}}+\mu_{i,j}^{\rm{\psi}}-\xi_{i,j}$, $\chi^{\rm{l}}_{i,j}\!\!=\!\!\mu_{i,j}^{\rm{l}}\!-\!\left(1-\beta_{i,j}\right)\xi_{i,j}$,  $\gamma_{i,j}=\mu_{i,j}^{\rm{l}}+\mu_{i,j}^{\rm{t}}+\mu_{i,j}^{\rm{e}}-\xi_{i,j}$, $\Omega_{i,j}^{{\rm{\nu,\psi}}}=\mu_{i,j}^{\rm{\nu}}+\mu_{i,j}^{\rm{\psi}}-\beta_{i,j}\xi_{i,j}$, and $\chi^{\rm{t}}_{i,j}\!=\!\mu_{i,j}^{\rm{t}}\!-\!\beta_{i,j}\xi_{i,j}$ are defined for denotational simplicity. In these parameters, $\psi \in \{\rm{t,e}\}$ and $\nu\in\{\rm{l,t,e}\}$.
Additionally, $\Xi_{i,j}^{{1}}$, $\Xi_{i,j}^{{2}}$, $\Xi_{i,j}^{{3}}$, $\Xi_{i,j}^{{4}}$, and $\Xi_{i,j}^{{5}}$ are given at the bottom of the page. 
\end{mytheorem}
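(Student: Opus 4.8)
The plan is to start from the identity in Lemma~\ref{lemma1} and evaluate the three expectations in \eqref{eqn-9} one by one using the queueing structure of Section~\ref{sec_two}. Since the task-generation process at the sensor is Poisson with rate $\xi_{i,j}$, the inter-generation time $A_n$ is exponential with rate $\xi_{i,j}$, whence $\mathbb{E}[A_n]=1/\xi_{i,j}$ and $\mathbb{E}[A_n^{2}]=2/\xi_{i,j}^{2}$; the third term $\tfrac12\mathbb{E}[A_n^{2}]/\mathbb{E}[A_n]$ therefore contributes the trailing $1/\xi_{i,j}$ of \eqref{eqn-14} immediately. By the independent-thinning property assumed in Section~\ref{sec_two}, the arrival streams into the local processor and the transmitter are Poisson with rates $(1-\beta_{i,j})\xi_{i,j}$ and $\beta_{i,j}\xi_{i,j}$, so $A_n^{\rm{l}}$ and $A_n^{\rm{t}}$ are exponential with these rates and the local branch and the remote (tandem) branch evolve as independent subsystems.

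Next I would pin down the system-time laws. The local branch is a single M/M/1 queue, so in steady state $T_n^{\rm{l}}$ is exponential with rate $\chi_{i,j}^{\rm{l}}=\mu_{i,j}^{\rm{l}}-(1-\beta_{i,j})\xi_{i,j}$. For the remote branch, Burke's theorem (invoked already in Section~\ref{sec_two}) makes the transmitter's departure stream Poisson with rate $\beta_{i,j}\xi_{i,j}$, so the edge server is an independent M/M/1 queue; by the product-form property of the two-node Jackson network the stationary contents of the two stages are independent and $T_n^{\rm{t,e}}$ is a convolution of two exponentials, with rates $\chi_{i,j}^{\rm{t}}=\mu_{i,j}^{\rm{t}}-\beta_{i,j}\xi_{i,j}$ and $\mu_{i,j}^{\rm{e}}-\beta_{i,j}\xi_{i,j}$. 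Because $T_n^{\rm{l}}$ and $T_n^{\rm{t,e}}$ are independent, $\mathrm{P}(E_n^{\rm{l}})$ and $\mathrm{P}(E_n^{\rm{t,e}})$ are obtained by integrating the product density over $\{T_n^{\rm{l}}>T_n^{\rm{t,e}}\}$ and its complement; this is where the combinations $\gamma_{i,j}=\mu_{i,j}^{\rm{l}}+\mu_{i,j}^{\rm{t}}+\mu_{i,j}^{\rm{e}}-\xi_{i,j}$, $\Omega_{i,j}^{\rm{t,e}}$ and the $\omega_{i,j}^{\psi}$ first enter.

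The core of the proof is the two joint expectations $\mathbb{E}[A_n^{\rm{l}}T_n^{\rm{l}}\mid E_n^{\rm{l}}]$ and $\mathbb{E}[A_n^{\rm{t}}T_n^{\rm{t,e}}\mid E_n^{\rm{t,e}}]$. Here $A_n^{\rm{l}}$ and $T_n^{\rm{l}}$ are \emph{not} independent: through Lindley's waiting-time recursion $W_n=\max\{0,\,W_{n-1}+S_{n-1}-A_n^{\rm{l}}\}$, a short interarrival time inflates the sojourn time. I would obtain the joint law of $(A_n^{\rm{l}},T_n^{\rm{l}})$ for the FCFS M/M/1 queue by conditioning on whether task $n$ finds the queue empty, in the spirit of the classical M/M/1 age analysis, and handle the remote branch by composing the transmitter's Lindley recursion with the edge server's sojourn time driven by the transmitter's inter-departure process. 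Integrating these joint densities against the conditioning regions $\{T_n^{\rm{l}}>T_n^{\rm{t,e}}\}$ and $\{T_n^{\rm{t,e}}>T_n^{\rm{l}}\}$ produces the five integrals $\Xi_{i,j}^{1},\dots,\Xi_{i,j}^{5}$ listed at the bottom of the page; substituting back into \eqref{eqn-9}, dividing by $\mathbb{E}[A_n]=1/\xi_{i,j}$, and collecting terms in the abbreviations $\omega_{i,j}^{\psi}$, $\chi_{i,j}^{\rm{l}}$, $\chi_{i,j}^{\rm{t}}$, $\gamma_{i,j}$ and $\Omega_{i,j}^{\nu,\psi}$ yields \eqref{eqn-14}. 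The main obstacle I anticipate is exactly this step: disentangling the interarrival--sojourn correlation while simultaneously conditioning on which of the two parallel branches is the bottleneck for task $n$ — the tandem structure of the remote branch turns its contribution into a genuine double convolution, and reducing the resulting integrals to the closed forms $\Xi_{i,j}^{1},\dots,\Xi_{i,j}^{5}$ is the bulk of the computation.
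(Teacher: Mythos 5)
Your proposal follows essentially the same route as the paper's Appendix~B: computing $\mathbb{E}[A_n]$ and $\mathbb{E}[A_n^2]$ from the Poisson generation process, taking $T_n^{\rm{l}}$ exponential and $T_n^{\rm{t,e}}$ as a convolution of two exponentials, obtaining ${\rm{P}}(E_n^{\rm{l}})$ and ${\rm{P}}(E_n^{\rm{t,e}})$ by integrating the product densities, and resolving the interarrival--sojourn correlation by decomposing $T_n = S_n + W_n$ with the Lindley-type waiting-time recursion (including the extra tandem complication at the edge server, where the arrival interval $A_n^{\rm{e}}$ is driven by the transmitter's departures). The plan is correct and consistent with the paper's own derivation of $\Xi_{i,j}^{1},\dots,\Xi_{i,j}^{5}$.
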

\begin{myproof}
	See Appendix \ref{proof3}.
\end{myproof}
\begin{figure*}[b]
	\hrule
	\vspace{0.2cm}
	\begin{equation}\label{eqn-ATL}
		\begin{aligned}
			\Xi _{i,j}^{1} &= \frac{1}{{\mu _{i,j}^{\rm{l}}{\xi _{i,j}}}} + \frac{{\chi_{i,j}^{\rm{l}}\left( {\Omega _{i,j}^{\rm{l,t} } + \omega _{i,j}^{\rm{e}}} \right)}}{{\mu _{i,j}^{\rm{l}}{\xi _{i,j}}\Omega _{i,j}^{\rm{l,t} }\Omega _{i,j}^{\rm{l,e} }}} + \frac{{\left( {1 - {\beta _{i,j}}} \right)^2\xi_{i,j}}}{\mu_{i,j}^{\rm{l}}{\Omega _{i,j}^{\rm{l,t} }\Omega _{i,j}^{\rm{l,e} }}}\left[ {\frac{{\Omega _{i,j}^{\rm{l,t} } + \Omega _{i,j}^{\rm{l,e} } - \mu _{i,j}^{\rm{l}}}}{{\Omega _{i,j}^{\rm{l,e} }\omega _{i,j}^{\rm{e}} - \Omega _{i,j}^{\rm{l,t} }\omega _{i,j}^{\rm{t}}}}\left( {\frac{{\Omega _{i,j}^{\rm{l,e} }\omega _{i,j}^{\rm{e}}}}{{\omega _{i,j}^{\rm{t}}}} - \frac{{\Omega _{i,j}^{\rm{l,t} }\omega _{i,j}^{\rm{t}}}}{{\omega _{i,j}^{\rm{e}}}}} \right)} \right. \\
			&+ \left. {\frac{{{3\mu _{i,j}^{\rm{t}} - 2{\beta _{i,j}}{\xi _{i,j}}}}}{{\chi_{i,j}^{\rm{l}}}}\left( {\frac{{\left( {\mu _{i,j}^{\rm{t}} - {\beta _{i,j}}{\xi _{i,j}}} \right)\left( {\mu _{i,j}^{\rm{e}} - {\beta _{i,j}}{\xi _{i,j}}} \right)}}{{\mu _{i,j}^{\rm{l}}}} + \Omega _{i,j}^{\rm{l,t} } + \Omega _{i,j}^{\rm{l,e} } - \mu _{i,j}^{\rm{l}}} \right)} \right]-\frac{{2\left( {1 - {\beta _{i,j}}} \right)^2}}{{{{\left( {\mu _{i,j}^{\rm{l}}} \right)}^3}}}.
		\end{aligned}
	\end{equation}
	\hrule
	\vspace{0.2cm}
	\begin{equation}\label{eqn-ATT}
		\begin{aligned}
			\Xi_{i,j}^{2}&=\frac{1}{{\mu _{i,j}^{\rm{t}}{\xi _{i,j}}}} + \frac{{\left( {\mu _{i,j}^{\rm{t}} - {\beta _{i,j}}{\xi _{i,j}}} \right)\left( {\mu _{i,j}^{\rm{e}} - {\beta _{i,j}}{\xi _{i,j}}} \right)}}{{\mu _{i,j}^{\rm{t}}{\xi _{i,j}}\Omega_{i,j}^{\rm{l,t}}\left( {\Omega _{i,j}^{\rm{l,e}} + \omega _{i,j}^{\rm{t}} - \mu _{i,j}^{\rm{l}}} \right)}}+\frac{\beta _{i,j}^2\xi_{i,j}}{\mu _{i,j}^{\rm{t}}}\left[ {\frac{{ {3\mu _{i,j}^{\rm{t}} - 2{\beta _{i,j}}{\xi _{i,j}}} }}{{\mu _{i,j}^{\rm{t}}\left( {\mu _{i,j}^{\rm{t}} - {\beta _{i,j}}{\xi _{i,j}}} \right)}} } \right.+\left. {\frac{{ {\mu _{i,j}^{\rm{e}} - {\beta _{i,j}}{\xi _{i,j}}} }}{{\omega _{i,j}^{\rm{t}}\omega _{i,j}^{\rm{e}}\Omega_{i,j}^{\rm{l,t}}}}} \right]-\frac{2\beta_{i,j}^2\xi_{i,j}}{(\mu_{i,j}^{\rm{t}})^3}.
		\end{aligned}
	\end{equation}
	\hrule
	\vspace{0.2cm}
	\begin{equation}\label{eqn-ASR}
		\begin{aligned}
			\Xi_{i,j}^{3}=\frac{1}{{\mu _{i,j}^{\rm{e}}{\xi _{i,j}}}} + \frac{{\mu _{i,j}^{\rm{t}} - {\beta _{i,j}}{\xi _{i,j}}}}{{{\xi _{i,j}}\Omega_{i,j}^{\rm{l,e}}\gamma_{i,j}}}.
		\end{aligned}
	\end{equation}
	\hrule
	\vspace{0.2cm}
	\begin{equation}\label{eqn-AWR1}
		\begin{aligned}
			\Xi_{i,j}^{4} &= \frac{{\mu _{i,j}^{\rm{e}}\left(\mu_{i,j}^{\rm{t}}-\beta_{i,j}\xi_{i,j}\right)}}{\mu_{i,j}^{\rm{t}}{\omega _{i,j}^{\rm{t}}\Omega_{i,j}^{\rm{l,e}}}}\left( {\frac{1}{{\mu_{i,j}^{\rm{e}}-\beta_{i,j}\xi_{i,j}}} + \frac{1}{{\omega _{i,j}^{\rm{e}}}} + \frac{1}{\gamma_{i,j}}} \right)- \frac{1}{\mu_{i,j}^{\rm{t}}{\Omega _{i,j}^{\rm{t,e} }}} + \frac{{\chi_{i,j}^{\rm{l}}\gamma_{i,j}\left( {\mu _{i,j}^{\rm{t}} + 2\mu _{i,j}^{\rm{e}} - 2{\beta _{i,j}}{\xi _{i,j}}} \right)}}{\mu_{i,j}^{\rm{t}}{\omega _{i,j}^{\rm{t}}\left(\mu_{i,j}^{\rm{e}}-\beta_{i,j}\xi_{i,j}\right)\Omega _{i,j}^{\rm{t,e} }\Omega_{i,j}^{\rm{l,e}}}}.
		\end{aligned}
	\end{equation}
	\hrule
	\vspace{0.2cm}
	\begin{equation}\label{eqn-AWR2}
		\begin{aligned}
			\Xi_{i,j}^{5}&=\frac{\mu_{i,j}^{\rm{t}}+\mu_{i,j}^{\rm{e}}}{\mu_{i,j}^{\rm{t}}\mu_{i,j}^{\rm{e}}}\left[\frac{1}{{\mu_{i,j}^{\rm{e}}-\beta_{i,j}\xi_{i,j}}} - \frac{1}{{\Omega _{i,j}^{\rm{t,e} }}}+\frac{{\mu _{i,j}^{\rm{t}}\mu _{i,j}^{\rm{e}}}}{{\left( {\chi _{i,j}^{\rm{l}} + \mu _{i,j}^{\rm{t}}} \right)\left( {\chi _{i,j}^{\rm{l}} + \mu _{i,j}^{\rm{e}}} \right)}}\left( {\frac{1}{{\omega _{i,j}^{\rm{t}}}} + \frac{1}{\gamma_{i,j}}} \right)\right.\\
			& + \frac{{\chi _{i,j}^{\rm{l}}\left( \chi_{i,j}^{\rm{l}}+\mu_{i,j}^{\rm{t}}+\mu_{i,j}^{\rm{e}} \right)}}{{\left( {\chi _{i,j}^{\rm{l}} + \mu _{i,j}^{\rm{t}}} \right)\left( {\chi _{i,j}^{\rm{l}} + \mu _{i,j}^{\rm{e}}} \right)}}\left( {\frac{{\mu _{i,j}^{\rm{e}}}}{{\gamma_{i,j}\left( \gamma_{i,j}+\mu_{i,j}^{\rm{e}} \right)}} + \frac{1}{{\Omega _{i,j}^{\rm{t,e} }}}} \right) -\left. \frac{{\mu _{i,j}^{\rm{e}}\left(\mu_{i,j}^{\rm{e}}-\beta_{i,j}\xi_{i,j}\right)}}{{\gamma_{i,j}\left[ { \left(\omega _{i,j}^{\rm{e}}+\mu _{i,j}^{\rm{e}}\right)\gamma_{i,j} + \mu_{i,j}^{\rm{e}}\left(\mu_{i,j}^{\rm{e}}-\beta_{i,j}\xi_{i,j}\right)} \right]}}\right].
		\end{aligned}
	\end{equation}
\end{figure*}

Based on the result in Theorem \ref{theorem1}, the closed-form expression of MAoI under the local computing scheme can be obtained, which is given in Corollary \ref{corollary1}.
\begin{mycorollary}\label{corollary1}
In the large-scale MEC networks, the closed-form expression of MAoI $\bar \Delta^{\rm{l}}_{i,j}$ under the local computing scheme with the FCFS policy can be expressed by
\begin{equation}\label{eqn-local}
	{\bar \Delta^{\rm{l}}_{i,j}}= \frac{\xi _{i,j}^2}{{{{\left( {\mu _{i,j}^{\rm{l}}} \right)}^2}\left( {\mu _{i,j}^{\rm{l}} - {\xi _{i,j}}} \right)}} + \frac{1}{{\mu _{i,j}^{\rm{l}}}} + \frac{1}{{{\xi _{i,j}}}}.
\end{equation}
\end{mycorollary}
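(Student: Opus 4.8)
The plan is to obtain \eqref{eqn-local} as the $\beta_{i,j}\to 0$ specialization of Theorem~\ref{theorem1}, exploiting the remark made after the description of the partial offloading model that $\beta_{i,j}=0$ reduces the partial offloading scheme to the local computing scheme. First I would record what $\beta_{i,j}=0$ does to the queueing model of Section~\ref{sec_two}: the thinned arrival rate $\beta_{i,j}\xi_{i,j}$ to the transmitter becomes $0$, the offloading and edge service times $S_{i,j}^{\rm t}=\beta_{i,j}K_{i,j}$ and $S_{i,j}^{\rm e}=\beta_{i,j}H_{i,j}$ vanish so that $\mu_{i,j}^{\rm t},\mu_{i,j}^{\rm e}\to\infty$, and the local service time $S_{i,j}^{\rm l}=(1-\beta_{i,j})G_{i,j}$ returns to $G_{i,j}$, i.e. $\mu_{i,j}^{\rm l}=\kappa_{i,j}$. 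Hence the Jackson network collapses to the single M/M/1 queue at the local processor with arrival rate $\xi_{i,j}$ and service rate $\mu_{i,j}^{\rm l}$; the remote-computing system time $T_n^{\rm t,e}$ degenerates to $0$, so ${\rm P}(E_n^{\rm l})=1$, ${\rm P}(E_n^{\rm t,e})=0$, and $A_n^{\rm l}=A_n$.

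With this in hand I would execute the substitution in two passes. In the first pass, set $\beta_{i,j}=0$ in \eqref{eqn-14}: the two terms carrying explicit $\beta_{i,j}^{2}$ prefactors (the $\Xi_{i,j}^{4}$ and $\Xi_{i,j}^{5}$ contributions) drop out, $\chi_{i,j}^{\rm l}$ collapses to $\mu_{i,j}^{\rm l}-\xi_{i,j}$, and every occurrence of $\beta_{i,j}\xi_{i,j}$ disappears inside $\Xi_{i,j}^{1},\Xi_{i,j}^{2},\Xi_{i,j}^{3}$ and inside $\omega_{i,j}^{\rm t},\omega_{i,j}^{\rm e},\gamma_{i,j},\Omega_{i,j}^{\nu,\psi}$. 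In the second pass, pass to the limit $\mu_{i,j}^{\rm t},\mu_{i,j}^{\rm e}\to\infty$ and resolve the indeterminate forms: the prefactor $\xi_{i,j}/(\omega_{i,j}^{\rm t}\omega_{i,j}^{\rm e})$ is of order $1/(\mu_{i,j}^{\rm t}\mu_{i,j}^{\rm e})$, which offsets the weight $(\mu_{i,j}^{\rm t}-\beta_{i,j}\xi_{i,j})(\mu_{i,j}^{\rm e}-\beta_{i,j}\xi_{i,j})$ multiplying $\Xi_{i,j}^{1}$ and dominates the weight $\chi_{i,j}^{\rm l}\gamma_{i,j}$ (of order $\mu_{i,j}^{\rm t}+\mu_{i,j}^{\rm e}$) multiplying $\Xi_{i,j}^{2}+\Xi_{i,j}^{3}$, so the latter group vanishes; after collecting the surviving pieces one should be left with $\frac{\xi_{i,j}^{2}}{(\mu_{i,j}^{\rm l})^{2}(\mu_{i,j}^{\rm l}-\xi_{i,j})}+\frac{1}{\mu_{i,j}^{\rm l}}$, which together with the additive $1/\xi_{i,j}$ already present in \eqref{eqn-14} yields \eqref{eqn-local}.

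As an independent check I would also derive \eqref{eqn-local} straight from Lemma~\ref{lemma1}: inserting $\beta_{i,j}=0$, $A_n^{\rm l}=A_n$, ${\rm P}(E_n^{\rm l})=1$, ${\rm P}(E_n^{\rm t,e})=0$ into \eqref{eqn-9} gives $\bar\Delta_{i,j}^{\rm l}=\xi_{i,j}\big(\mathbb{E}[A_n T_n^{\rm l}]+\tfrac12\mathbb{E}[A_n^{2}]\big)$ because $\mathbb{E}[A_n]=1/\xi_{i,j}$; then, using $A_n\sim{\rm Exp}(\xi_{i,j})$ so $\mathbb{E}[A_n^{2}]=2/\xi_{i,j}^{2}$, and computing $\mathbb{E}[A_n T_n^{\rm l}]$ from the stationary FCFS M/M/1 recursion $T_n^{\rm l}=\max\{0,T_{n-1}^{\rm l}-A_n\}+X_n$, where $X_n$ is the exponential service time and $T_{n-1}^{\rm l}$ is independent of $A_n$ by the memoryless arrival process, one recovers the classical M/M/1 mean AoI $\frac{1}{\mu_{i,j}^{\rm l}}\big(1+\frac{\mu_{i,j}^{\rm l}}{\xi_{i,j}}+\frac{(\xi_{i,j}/\mu_{i,j}^{\rm l})^{2}}{1-\xi_{i,j}/\mu_{i,j}^{\rm l}}\big)$, which is exactly \eqref{eqn-local}. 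The main obstacle in the first route is the bookkeeping of the $\mu_{i,j}^{\rm t},\mu_{i,j}^{\rm e}\to\infty$ limit, i.e. verifying that the apparently divergent and indeterminate parts of $(\mu_{i,j}^{\rm t}-\beta_{i,j}\xi_{i,j})(\mu_{i,j}^{\rm e}-\beta_{i,j}\xi_{i,j})\Xi_{i,j}^{1}$ cancel cleanly; in the second route it is the single nontrivial moment $\mathbb{E}[A_n T_n^{\rm l}]$, i.e. the interarrival--sojourn correlation in the stationary M/M/1 queue.
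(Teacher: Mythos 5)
Your primary route coincides with the paper's own proof, which is literally the one-line statement ``substitute $\beta_{i,j}=0$ into \eqref{eqn-14}.'' What you add, and what the paper omits, is the observation that this substitution is not a plain evaluation: since $\mu_{i,j}^{\rm t}=1/(\beta_{i,j}K_{i,j})$ and $\mu_{i,j}^{\rm e}=1/(\beta_{i,j}H_{i,j})$, setting $\beta_{i,j}=0$ simultaneously sends $\mu_{i,j}^{\rm t},\mu_{i,j}^{\rm e}\to\infty$, so \eqref{eqn-local} can only emerge as a limit in which the $\mu_{i,j}^{\rm t}\mu_{i,j}^{\rm e}$ growth of the $\Xi_{i,j}^{1}$ weight is cancelled by the $1/(\omega_{i,j}^{\rm t}\omega_{i,j}^{\rm e})$ prefactor. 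This is essential: treating $\mu_{i,j}^{\rm t},\mu_{i,j}^{\rm e}$ as free finite parameters and merely putting $\beta_{i,j}=0$ does \emph{not} reproduce \eqref{eqn-local}, because, e.g., ${\rm{P}}(E_n^{\rm l})$ in \eqref{eq_el} equals one only in the limit. Be warned that the deferred ``bookkeeping'' is where all the work lies, and it is delicate: the sub-term of $\Xi_{i,j}^{1}$ proportional to $\frac{3\mu_{i,j}^{\rm t}-2\beta_{i,j}\xi_{i,j}}{\chi_{i,j}^{\rm l}}\cdot\frac{(\mu_{i,j}^{\rm t}-\beta_{i,j}\xi_{i,j})(\mu_{i,j}^{\rm e}-\beta_{i,j}\xi_{i,j})}{\mu_{i,j}^{\rm l}}$ grows like $\mu_{i,j}^{\rm t}$ even after multiplication by the prefactor $\frac{(1-\beta_{i,j})^2\xi_{i,j}}{\mu_{i,j}^{\rm l}\Omega_{i,j}^{\rm l,t}\Omega_{i,j}^{\rm l,e}}$, so the limit does not obviously close as printed; you should either track this cancellation explicitly or be prepared to find a typographical error in \eqref{eqn-ATL}. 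Your second route is genuinely different from the paper's and is the more robust one: with $\beta_{i,j}=0$ the Jackson network degenerates to a single stationary M/M/1 FCFS queue, \eqref{eqn-9} reduces to $\xi_{i,j}\left(\mathbb{E}[A_nT_n^{\rm l}]+\tfrac{1}{2}\mathbb{E}[A_n^2]\right)$, and \eqref{eqn-local} is exactly the classical M/M/1 mean AoI $\frac{1}{\mu}\left(1+\frac{1}{\rho}+\frac{\rho^2}{1-\rho}\right)$ of \cite{ref1}; the only nontrivial moment is the interarrival--sojourn correlation $\mathbb{E}[A_nT_n^{\rm l}]$, which is standard. That route buys an independent sanity check on Theorem~\ref{theorem1} itself, at the price of not being a corollary of it.
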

\begin{myproof}
By substituting $\beta_{i,j}=0$ into \eqref{eqn-14}, the equation \eqref{eqn-local} can be obtained. 
\end{myproof}

Similarly, the closed-form expression of MAoI under the remote computing scheme can be obtained, which is given in Corollary \ref{thm-remote} as follows.
\begin{mycorollary}\label{thm-remote}
In the large-scale MEC networks, the closed-form result of MAoI $\bar \Delta^{\rm{e}}_{i,j}$ under the remote computing scheme with the FCFS policy can be expressed by
\begin{equation}\label{eqn-remote}
	\begin{aligned}
		{\bar \Delta^{\rm{e}}_{i,j}}
		&=\frac{{{\xi _{i,j}^2}\left[ {\left( {\mu _{i,j}^{\rm{t}} + \mu _{i,j}^{\rm{e}}} \right)\left( {\mu _{i,j}^{\rm{t}} + \mu _{i,j}^{\rm{e}} - {\xi _{i,j}}} \right) - \mu _{i,j}^{\rm{t}}\mu _{i,j}^{\rm{e}}} \right]}}{{\mu _{i,j}^{\rm{t}}{{\left( {\mu _{i,j}^{\rm{e}}} \right)}^2}\left( {\mu _{i,j}^{\rm{e}} - {\xi _{i,j}}} \right)\left( {\mu _{i,j}^{\rm{t}} + \mu _{i,j}^{\rm{e}} - {\xi _{i,j}}} \right)}}\\
		& + \frac{\xi _{i,j}^2}{{{{\left( {\mu _{i,j}^{\rm{t}}} \right)}^2}\left( {\mu _{i,j}^{\rm{t}} - {\xi _{i,j}}} \right)}} + \frac{1}{{\mu _{i,j}^{\rm{t}}}} + \frac{1}{{{\xi _{i,j}}}} + \frac{1}{{\mu _{i,j}^{\rm{e}}}}.
	\end{aligned}
\end{equation}
\end{mycorollary}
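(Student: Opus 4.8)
The plan is to obtain \eqref{eqn-remote} as the degenerate case $\beta_{i,j}=1$ of Theorem~\ref{theorem1}, exactly in the spirit of Corollary~\ref{corollary1}. First I would observe that setting $\beta_{i,j}=1$ collapses the Jackson network of the partial offloading scheme onto the two-node $\mathrm{M/M/1}$ tandem of the remote computing scheme: the local processor then receives tasks at rate $(1-\beta_{i,j})\xi_{i,j}=0$ and each of its sub-tasks has size $(1-\beta_{i,j})L_{i,j}=0$, so $S_{i,j}^{\rm{l}}=(1-\beta_{i,j})G_{i,j}=0$ and hence $\mu_{i,j}^{\rm{l}}=1/S_{i,j}^{\rm{l}}\to\infty$. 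Thus the substitution to carry out in \eqref{eqn-14} is $\beta_{i,j}=1$ together with the limit $\mu_{i,j}^{\rm{l}}\to\infty$, after which only the transmitter queue (rate $\xi_{i,j}$, service rate $\mu_{i,j}^{\rm{t}}$) and the edge queue (rate $\xi_{i,j}$, service rate $\mu_{i,j}^{\rm{e}}$) remain, the latter fed by a Poisson$(\xi_{i,j})$ stream by Burke's theorem \cite{ref30}.

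Next I would push this substitution through the auxiliary quantities: with $\beta_{i,j}=1$ one has $\chi_{i,j}^{\rm{l}}=\mu_{i,j}^{\rm{l}}$, $\omega_{i,j}^{\psi}=\mu_{i,j}^{\rm{l}}+\mu_{i,j}^{\psi}-\xi_{i,j}$, $\gamma_{i,j}=\mu_{i,j}^{\rm{l}}+\mu_{i,j}^{\rm{t}}+\mu_{i,j}^{\rm{e}}-\xi_{i,j}$, $\Omega_{i,j}^{\rm{t,e}}=\mu_{i,j}^{\rm{t}}+\mu_{i,j}^{\rm{e}}-\xi_{i,j}$ and $\chi_{i,j}^{\rm{t}}=\mu_{i,j}^{\rm{t}}-\xi_{i,j}$, while every summand of \eqref{eqn-ATL}--\eqref{eqn-AWR2} carrying a factor $(1-\beta_{i,j})$ is killed. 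Then I would take $\mu_{i,j}^{\rm{l}}\to\infty$ term by term in \eqref{eqn-14}: the first contribution $(\mu_{i,j}^{\rm{t}}-\xi_{i,j})(\mu_{i,j}^{\rm{e}}-\xi_{i,j})\,\Xi_{i,j}^{1}$ stays $O(1)$ and is annihilated by the prefactor $\xi_{i,j}/(\omega_{i,j}^{\rm{t}}\omega_{i,j}^{\rm{e}})=O\!\big((\mu_{i,j}^{\rm{l}})^{-2}\big)$, whereas the second block, of order $\chi_{i,j}^{\rm{l}}\gamma_{i,j}=O\!\big((\mu_{i,j}^{\rm{l}})^{2}\big)$, exactly balances that prefactor and leaves a finite limit built only from the $\mu_{i,j}^{\rm{l}}$-independent pieces of $\Xi_{i,j}^{2},\Xi_{i,j}^{3},\Xi_{i,j}^{4},\Xi_{i,j}^{5}$. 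Collecting the survivors and simplifying should yield \eqref{eqn-remote}, which I would read as the single-queue MAoI $\xi_{i,j}^{2}/\big[(\mu_{i,j}^{\rm{t}})^{2}(\mu_{i,j}^{\rm{t}}-\xi_{i,j})\big]+1/\mu_{i,j}^{\rm{t}}+1/\xi_{i,j}$ for the transmitter stage plus the two extra terms contributed by the edge stage.

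As an independent check I would re-derive \eqref{eqn-remote} directly by repeating the area argument of Lemma~\ref{lemma1} and Theorem~\ref{theorem1} on the two-node tandem: decompose the AoI sawtooth into the polygon $Q_1$, the isosceles trapezoids $Q_n$ and the terminating triangle, so that $\bar\Delta_{i,j}^{\rm{e}}=\xi_{i,j}\,\mathbb{E}[A_nT_n^{\rm{t,e}}]+1/\xi_{i,j}$ with $A_n$ the inter-arrival gap and $T_n^{\rm{t,e}}$ the sojourn time through the tandem, and then evaluate the joint moment using the product-form stationary law of the tandem together with Burke's theorem.

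Main obstacle: I expect the limit $\mu_{i,j}^{\rm{l}}\to\infty$ in \eqref{eqn-ATL}--\eqref{eqn-AWR2} to be the delicate part, since several summands are of competing orders $(\mu_{i,j}^{\rm{l}})^{0}$, $(\mu_{i,j}^{\rm{l}})^{-1}$, $(\mu_{i,j}^{\rm{l}})^{-2}$ and exactly the right ones must be retained before the surviving terms can be consolidated into the compact form \eqref{eqn-remote}. In the alternative direct route the hard step is instead the joint moment $\mathbb{E}[A_nT_n^{\rm{t,e}}]$: under FCFS the per-stage sojourn times of a tagged task are correlated with each other and with the inter-arrival gap that precedes it, and handling that dependence is what generates the cross term $(\mu_{i,j}^{\rm{t}}+\mu_{i,j}^{\rm{e}})(\mu_{i,j}^{\rm{t}}+\mu_{i,j}^{\rm{e}}-\xi_{i,j})-\mu_{i,j}^{\rm{t}}\mu_{i,j}^{\rm{e}}$ in the numerator of \eqref{eqn-remote}.
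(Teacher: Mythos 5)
Your proposal follows essentially the same route as the paper, whose entire proof is the one-line substitution of $\beta_{i,j}=1$ into \eqref{eqn-14}. You are in fact more careful than the paper in observing that this substitution must be accompanied by the limit $\mu_{i,j}^{\rm{l}}\to\infty$ (since $S_{i,j}^{\rm{l}}=(1-\beta_{i,j})G_{i,j}\to 0$), without which residual $\mu_{i,j}^{\rm{l}}$-dependence would survive in \eqref{eqn-14}; your order-counting of the two blocks against the prefactor $\xi_{i,j}/(\omega_{i,j}^{\rm{t}}\omega_{i,j}^{\rm{e}})$ is the right way to carry that limit out.
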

\begin{myproof} 
By substituting $\beta_{i,j}=1$ into \eqref{eqn-14}, the equation \eqref{eqn-remote} can be obtained.
\end{myproof}

For the partial offloading scheme, it can be observed that the MAoI performance is affected by the COR and TGR. Therefore, to analyse the effects of COR and TGR on the MAoI performance in the large-scale partial offloading MEC networks, we formulate an optimisation problem to minimise the MAoI given in Theorem \ref{theorem1} as follows:
\begin{equation}
\begin{split}
&\arg \min_{\{ {\beta _{i,j}}\},\{{\xi_{i,j}}\}} \bar \Delta _{i,j}^{{\rm{p}}}\\
&\begin{array}{l@{\quad}l@{}l@{\quad}r}
{\rm{s.t.}} &{\rm{C1}}: 0\leq{\beta _{i,j}}\leq1\\
		&{\rm{C2}}: (1-\beta_{i,j})\xi_{i,j}\leq\mu_{i,j}^{\rm{l}}\\
		&{\rm{C3}}: \beta_{i,j}\xi_{i,j}\leq\mu_{i,j}^{\rm{t}}\\
		&{\rm{C4}}: \beta_{i,j}\xi_{i,j}\leq\mu_{i,j}^{\rm{e}},\\
\end{array}
\end{split}
\end{equation}
where C1 gives the feasible range of COR. C2, C3, and C4 provide the feasible ranges of TGR at the local processor, the transmitter, and the edge server to ensure the stability of the Jackson network, respectively. Since the MAoI has been derived in closed-form and the constraints of COR and TGR are limited, the optimal COR and TGR for minimising the MAoI can be obtained by using the bisection method \cite{ref15}.

\section{Numerical results and analysis}\label{sec_four}
In this section, the theoretical MAoI expressions for the different computing offloading schemes are validated by the Monte Carlo simulations. Based on the derived expressions, the effects of COR, TGR, and computing capability of UEs in terms of the MAoI are analysed numerically in the large-scale partial offloading MEC networks.

\subsection{Validation}
In each simulation iteration, we deploy the locations of the BSs and UEs following the MCP described in Section \ref{sec_two}. Subsequently, according to the equation \eqref{eqn-1}, we calculate the SIR of the typical BS in each iteration. As a result, the STP can be obtained by calculating the cumulative distribution function (CDF) of SIR with 10,000 iterations. Equipped with the STP, the results of MAoI can be validated. 
Subsequently, we model 10,000 arrival tasks following the Poisson process in the Jackson network. The service time of each task is modelled as the exponential distribution with parameters related to the service rate in each queueing model of the Jackson network. For each arrival task, we record the arrival time and the departure time in each queueing model of the Jackson network. Thus, the simulation results of MAoI can be obtained by calculating the mathematical conditional expectation in \eqref{eqn-9}.  
Note that the computing capability of UEs depends on their type. Since UEs with different computing capabilities have different MAoI results, the evaluation of MAoI performance is complicated. To simplify the analysis, we assume that each UE has the same values for TGR, COR, and computing frequency. Meanwhile, it is also assumed that the number of UEs in each BS is the same and that the computing frequency of the BS are equally allocated to all UEs in the BS \cite{ref24}. The parameter values are listed in Table \ref{tab-2} unless otherwise specified.
\begin{table}[!b]
	\centering
	\caption{Simulation Values}
	\label{tab-2}
	\begin{tabular}{|c|c|c|c|}
		\hline
		& & & \\[-6pt] 
		Parameters & Values & Parameters & Values \\
		\hline
		& & & \\[-6pt] 
		$\tau$ & 0 dB & $\alpha$& 4 \\
		\hline
		& & &\\ [-6pt]
		$\epsilon$ & 0.5 & ${\rm{N}}_i$ & 20 \\
		\hline
		& & &\\ [-6pt]
		$B_{\rm{tot}}$ & 50 MHz & $L_{i,j}$ & 2 Mbits \\
		\hline
		& & &\\ [-6pt]
		$f_{i,j}$ & 1 GHz & $f^{\rm{B}}_{i}$ & 45 GHz \\
		\hline
		& & &\\ [-6pt]
		$C_{i,j}$ & 900 CPU cycles/bit& $\beta_{i,j}$ & 0.4 \\
		\hline
		& & &\\ [-6pt]
		$\xi_{i,j}$ & 0.2 & $\lambda_{\rm{B}}$ & $1\times 10^{-4}$ BSs/$m^2$ \\
		\hline
	\end{tabular}
\end{table}

\begin{figure}[!t]
\centering
\includegraphics[width=2.8in]{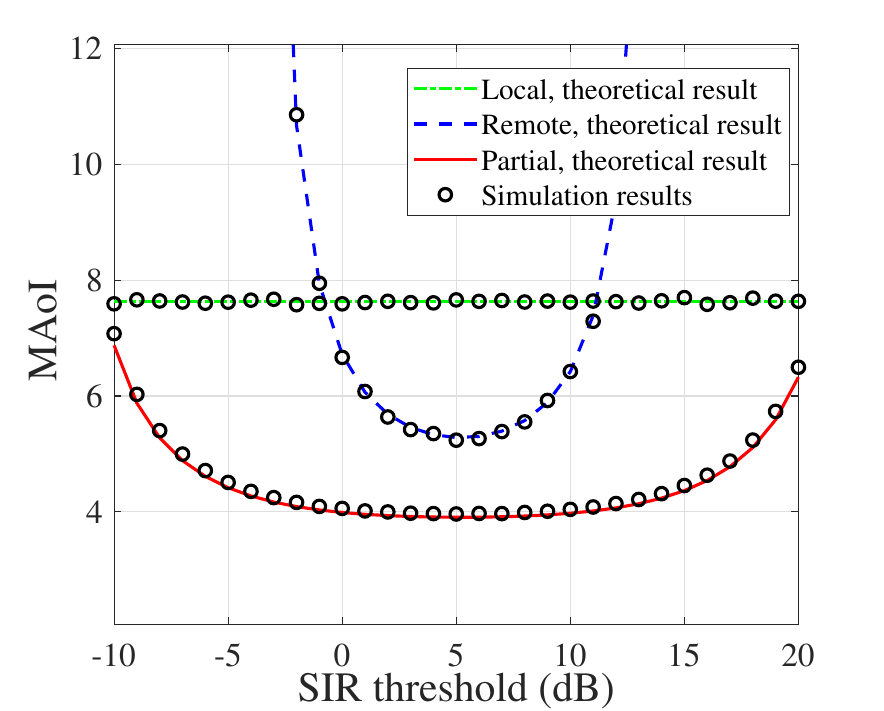}
\caption{The simulation and theoretical results of MAoI versus the SIR threshold for the local computing, remote computing, and partial offloading schemes.}\label{fig3}
\centering
\includegraphics[width=2.8in]{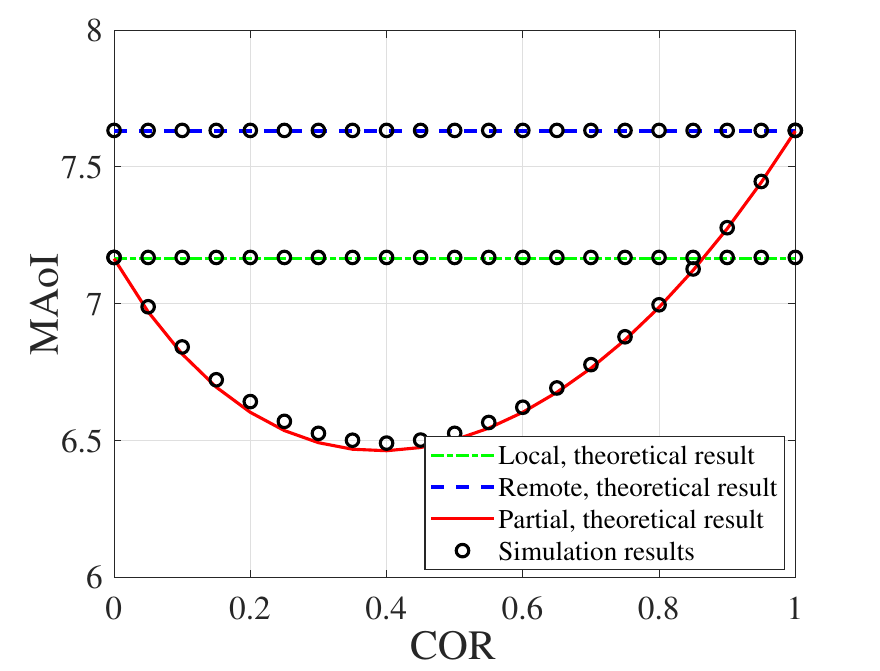}
\caption{The simulation and theoretical results of MAoI versus the COR for the local computing, remote computing, and partial offloading schemes.}\label{fig4}
\end{figure}
 
In Fig. \ref{fig3} and Fig. \ref{fig4}, the theoretical and simulation results of MAoI under the local computing, remote computing, and partial offloading schemes are plotted versus the SIR threshold and the COR, respectively. The results show that the theoretical MAoIs are matched with the simulation results, verifying the correctness of our derived closed-form expressions. Moreover, it is observed in Fig. \ref{fig4} that there exists an optimal COR to minimise the MAoI under the partial offloading scheme. Under our simulation environment, with the increasing of COR, a sharp decrease in terms of the MAoI occurs when the COR is in the range from 0 to 0.4, and then the MAoI gradually increases under the partial offloading scheme. It indicates that the COR has a significant impact on the MAoI performance under the partial offloading scheme and the optimal COR occurs near 0.4.  

\subsection{Analysis}
\begin{figure}[!t]
\centering
\includegraphics[width=2.9in]{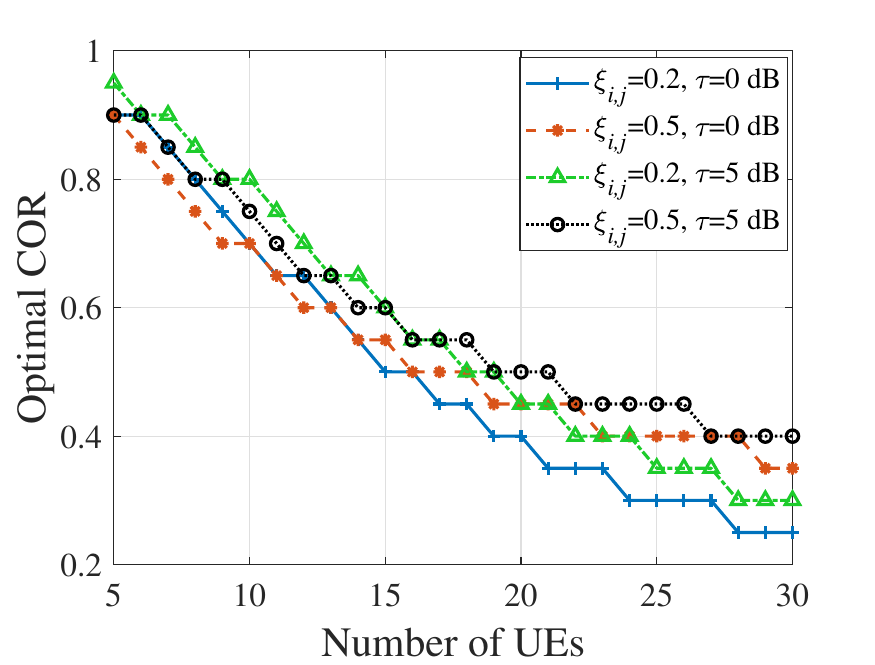}
\caption{The relationship between the optimal COR and the number of UEs with the TGR being 0.2 and 0.5 and the SIR threshold being 0 dB and 5 dB. }\label{fig5}
\end{figure}
In Fig. \ref{fig5}, we investigate the influence of the number of UEs on the optimal COR with the TGR being 0.2 and 0.5 and the SIR threshold being 0 and 5 dB. The results show that by increasing the number of UEs, the optimal COR under the partial offloading scheme declines. This is mainly because each UE is allocated fewer spectrum resources with the increasing of number of UEs. Therefore, offloading fewer computation-intensive tasks from the UE to its associated BS can reduce transmission delay, which decreases the MAoI. Furthermore, when the number of UEs is 30 and the SIR threshold is 0 dB, we compare the scenarios with $\xi_{i,j}=0.2$ and $\xi_{i,j}=0.5$. Since a higher TGR causes a higher computing burden on the UE, more computation-intensive tasks need to be offloaded to the BS from the UE to alleviate the computing burden on the UE, thus causing a significant improvement for the MAoI. Consequently, TGR has a significant impact on the optimal COR.

\begin{figure}[!t]
\centering
\includegraphics[width=2.8in]{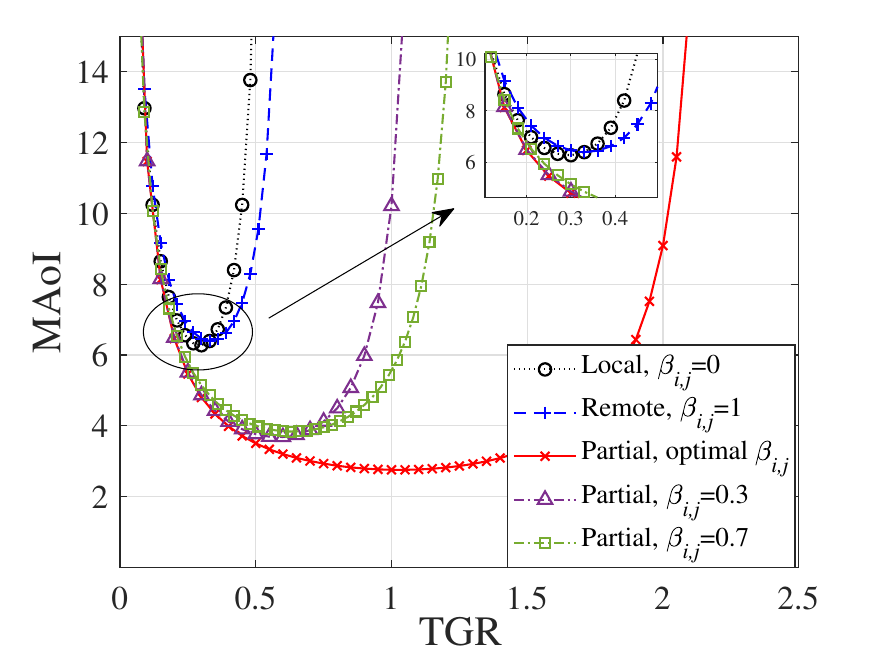}
\caption{The relationship between the MAoI and the TGR with the COR being 0, 0.3, 0.7, 1 and optimal. }\label{fig6}
\end{figure}
In Fig. \ref{fig6}, the MAoIs are illustrated versus the TGR with the COR $\beta_{i,j}$ being 0, 0.3, 0.7, 1, and optimal. It can be observed that with the increasing of TGR, a sharp decrease in terms of the MAoI occurs when the TGR is in the range from 0.3 to 1.2, and then the MAoI gradually increases based on our simulation environment. It indicates that the TGR has a significant impact on the MAoI performance. A smaller TGR results in a larger MAoI due to the server or processor being idle. Meanwhile, a larger TGR leads to a larger MAoI due to the extended waiting latency of computation-intensive tasks in queue. Furthermore, as compared with the local and remote computing schemes, i.e., $\beta_{i,j}=0$ and $\beta_{i,j}=1$, the MAoI with the optimal COR can be reduced by up to 58\% under the partial offloading scheme. As a result, the minimal MAoI can be obtained by jointly adjusting the optimal TRG and COR.

\begin{figure}[!t]
\centering
\includegraphics[width=2.8in]{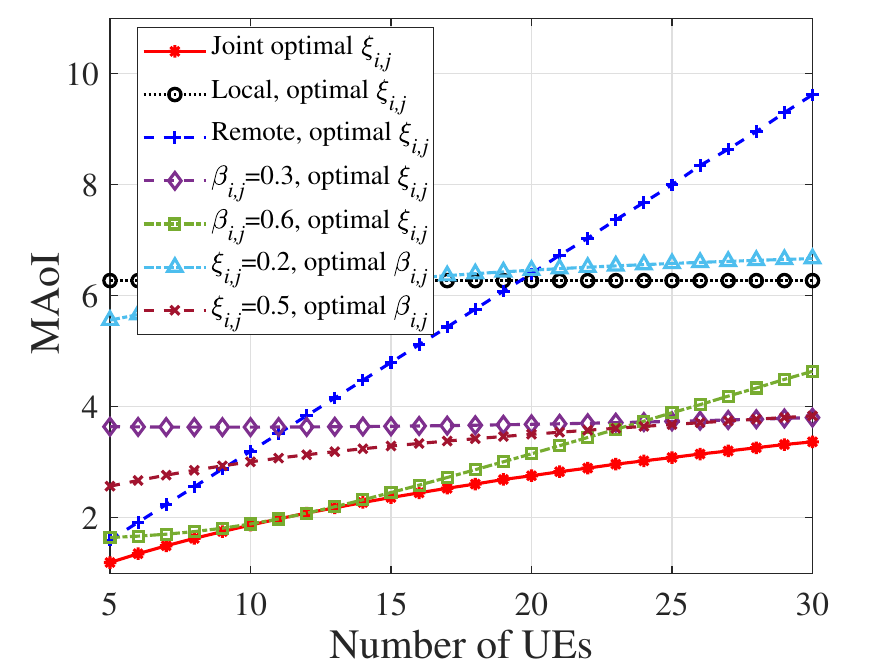}
\caption{The relationship between the MAoI and the number of UEs under the seven cases: joint optimal COR and TGR, optimal TGR with the COR being 0, 0.3, 0.6 and 1, optimal COR with the TGR being 0.2 and 0.5. }\label{fig7}
\end{figure}
In Fig. \ref{fig7}, we investigate the relationship between the MAoI and the number of UEs under different cases: 1) joint optimal OR and TGR; 2) optimal OR with the TGR being 0.2 and 0.5; 3) optimal TGR with the COR being 0, 0.3, 0.6 and 1. The results show that by adjusting the optimal TGR and COR, the partial offloading scheme outperforms the local and remote computing schemes in terms of the MAoI. The MAoI can be decreased by up to 51\% and 61\% under the partial offloading scheme with the optimal TGR and COR compared to the local and remote computing schemes when the number of UEs is 25, respectively. It indicates that the partial offloading scheme, which can fully utilize the computing resources in UEs and BSs, should be adopted to enhance the information freshness in the large-scale MEC networks.

\begin{figure}[!t]
\centering
\includegraphics[width=2.8in]{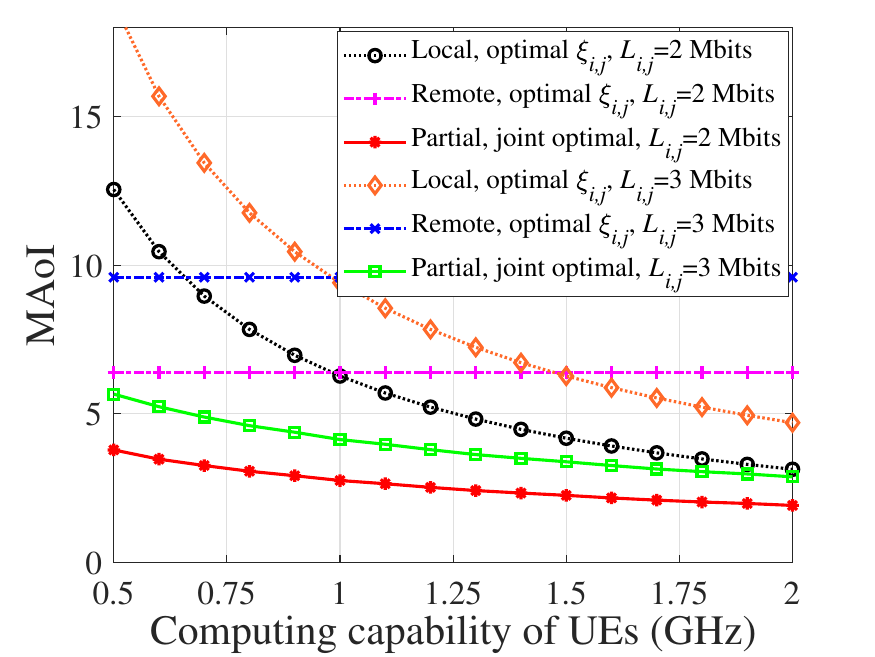}
\caption{The relationship between the MAoI and the computing capability of UEs with the average task size being 2 and 3 Mbits.}\label{fig8}
\end{figure}
Equipped with the optimal TGR and COR, in Fig. \ref{fig8}, the MAoIs are illustrated versus the computing capability of UE under the local computing, remote computing, and partial offloading schemes with the average size of tasks being 2 and 3 Mbits. It is found that by increasing the computing frequency of UE from 0.5 to 2 GHz, the MAoIs under the local computing scheme can be reduced by up to 9.5 when the average size of tasks is 2 Mbits, while the MAoI under the partial offloading scheme can be reduced by up to 3. This indicates that by employing the partial offloading scheme, the requirement on the computing capability of UEs can be reduced compared to the local computing scheme.

\section{Conclusion}\label{sec_five}
In this work, we have derived the closed-form expressions of MAoI in the large-scale partial offloading MEC networks. Based on these expressions, the effects of COR, TGR, and computing capability of UEs on MAoI performance are analysed numerically. It is observed that under our simulation environment, with optimal COR and TGR, the MAoI under the partial offloading scheme can be reduced by up to 51\% and 61\% compared to that under the local and remote computing schemes, respectively. Consequently, the partial offloading scheme should be applied in the large-scale MEC networks to enhance the information freshness. Furthermore, for systems requiring minimal MAoI, the pressure on UE computing capabilities can be alleviated by utilizing the partial offloading scheme. For future work, the joint optimisation of TGR and COR for different UE types should be considered to further improve MAoI performance in the large-scale partial offloading MSE networks.

\begin{appendices}
\section{Proof of Lemma \ref{lemma1}} \label{lp1}
According to \eqref{eqn-8}, the MAoI $\bar \Delta _{i,j}^{{\rm{p}}}$ under the partial offloading scheme can be transformed as 
\begin{equation}\label{eq22}
    \bar \Delta _{i,j}^{{\rm{p}}} = \mathop {\lim }\limits_{{\cal L} \to \infty } \left[\frac{{{Q_1} + {{\bar Q}_z}}}{{\cal L}} +\frac{{z - 2}}{{\cal L}}\frac{{\sum\nolimits_{n = 2}^{n=z - 1} {{Q_n}} }}{{z - 2}}\right].
\end{equation}
When $\mathcal{L}$ tends to infinity and the values of $Q_1$ and $\bar{Q}_n$ are finite, the term $\frac{{{Q_1} + {{\bar Q}_z}}}{{\cal L}}$ in \eqref{eq22} becomes zero. Besides, as shown in Fig. \ref{fig2}, the term $\frac{{z - 2}}{{\cal L}}$ in \eqref{eq22} can be expressed by $\frac{{z - 2}}{{\cal L}}=\frac{{z - 2}}{{{t_1} + \sum\nolimits_{n = 2}^z {{A_n}}  + T_z^{{\rm{max}}}}}$, where $A_n$ is the arrival interval between the $(n-1)$-th and $n$-th tasks at the sensor of UE, i.e., $A_n=t_n-t_{n-1}$. $T_z^{{\rm{max}}}$ is the system time of the $z$-th task under the partial offloading scheme. As the AoI decreases when the UE receives a complete computing result, the system time $T_n^{{\rm{max}}}$ for the $n$-th computation-intensive task is the maximum value between the system time $T_n^{\rm{l}}$ for local computing and the system time $T_n^{\rm{t,e}}$ for remote computing, i.e., $T_z^{{\rm{max}}}=\max\{T_z^{\rm{l}},T_z^{\rm{t,e}}\}$. Based on the law of large numbers \cite{ref32}, when the expectation of $A_n$ exists and the number of tasks $z$ is large enough, the arithmetic mean of $A_n$ can be approximately calculated by its mathematical expectation $\mathbb{E}[A_n]$. Moreover, $t_1$ and $T_z^{\rm{max}}$ are finite values with a probability of 1. Therefore, the term $\mathop {\lim }\limits_{{\cal L} \to \infty } \frac{{z - 2}}{{\cal L}}$ in \eqref{eq22} can be expressed by 
\begin{equation}\label{eq24}
    \mathop {\lim }\limits_{{\cal L} \to \infty } \frac{{z - 2}}{{\cal L}}=\frac{1}{\mathbb{E}{\left[ {{A_n}} \right]}}.
\end{equation}
For the term $\frac{{\sum\nolimits_{n = 2}^{n=z - 1} {{Q_n}} }}{{z - 2}}$ in \eqref{eq22}, the area $Q_n$ can be obtained by calculating the difference between two isosceles triangles, which is expressed by
\begin{equation}\label{eqn-12}
	\begin{aligned}
		{Q_n} &= \frac{1}{2}{\left( {{A_n} + T_n^{{\rm{max}}}} \right)^2} - \frac{1}{2}{\left( {T_n^{{\rm{max}}}} \right)^2} \\
		& = \frac{1}{2}{A_n ^2} + {A_n}T_n^{{\rm{max}}}.
	\end{aligned}
\end{equation}
As $z$ is approaching infinity, the term $\frac{{\sum\nolimits_{n = 2}^{n=z - 1} {{Q_n}} }}{{z - 2}}$ in \eqref{eq22} converges to its corresponding mathematical expectations similar to \eqref{eq24}. Consequently, the MAoI under the partial offloading scheme can be expressed by
	\begin{equation}\label{eqn-13}
		\bar \Delta _{i,j}^{{\rm{p}}} = \frac{1}{{\mathbb{E}\left[ {{A_n}} \right]}}\left( {\frac{1}{2}\mathbb{E}\left[ {A_n^2} \right] + \mathbb{E}\left[ {{A_n}T_n^{{\rm{max}}}} \right]} \right).
	\end{equation}
It can be observed that for the value of $T_n^{{\rm{max}}}$, there are two independent events, denoted by $E_n^{\rm{l}}$ and $E_n^{\rm{t,e}}$. For the event $E_n^{\rm{l}}$, the system time for local computing is longer than that for remote computing, i.e., $T_n^{\rm{l}}>T_n^{\rm{t,e}}$. Consequently, the system time $T_n^{{\rm{max}}}$ for the $n$-th computation-intensive task under the partial offloading scheme can be expressed by $T_n^{{\rm{max}}}=T_n^{\rm{l}}$. For the event $E_n^{\rm{t,e}}$, the system time for remote computing is longer than that for local computing, i.e., $T_n^{\rm{t,e}}>T_n^{\rm{l}}$. As a result, the system time $T_n^{{\rm{max}}}$ for the $n$-th computation-intensive task under the partial offloading scheme can be expressed by $T_n^{{\rm{max}}}=T_n^{\rm{t,e}}$. Moreover, recall that a portion $(1-\beta_{i,j})$ of the computation-intensive tasks is processed at the UE $\boldsymbol{{\rm{x}}}_{i,j}$, while the rest $\beta_{i,j}$ of them are offloaded to the BS $\boldsymbol{{\rm{y}}}_{i}$ for parallel computing. As a result, according to the law of total expectation \cite{ref22}, the MAoI can be calculated by
\begin{equation}\label{eqn14}
	\begin{aligned}
		\bar \Delta _{i,j}^{{\rm{p}}}& = \frac{1}{{\mathbb{E}\left[ {{A_n}} \right]}}\bigg( { \left( {1 - {\beta _{i,j}}} \right){\rm{P}}\left( {E_n^{\rm{l}}} \right)\mathbb{E}\left[ {A_n^{\rm{l}}T_n^{\rm{l}}\left| E_n^{\rm{l}} \right.} \right]}\\
		& +{{\beta _{i,j}}{\rm{P}}\left( {E_n^{\rm{t,e}}} \right)\mathbb{E}\left[ {A_n^{\rm{t}} {T_n^{\rm{t,e}}}\left| E_n^{\rm{t,e}} \right.} \right]} +\frac{1}{2}\mathbb{E}\left[ {{A_n}^2} \right] \!\!\bigg),
	\end{aligned}
\end{equation} 
which is given in Lemma \ref{lemma1}.

\section{Proof of Theorem \ref{theorem1}}\label{proof3}
It can be seen from \eqref{eqn-9} that $\mathbb{E}[A_n]$, $\mathbb{E}[{A_n}^2]$, ${\rm{P}}\left( {E_n^{\rm{l}}} \right)$, ${\rm{P}}\left( {E_n^{\rm{t,e}}} \right)$, $\mathbb{E}\left[ {A_n^{\rm{l}}T_n^{\rm{l}}| E_n^{\rm{l}}} \right]$, and $\mathbb{E}\left[ {A_n^{\rm{t}} {T_n^{\rm{t,e}}}| E_n^{\rm{t,e}}} \right]$ need to be derived to obtain the closed-form expression of MAoI under the partial offloading scheme. Since the arrival interval $A_n$ between the $(n-1)$-th and $n$-th computation-intensive tasks at the sensor of UE is an exponential random variable with parameter $\xi_{i,j}$, the expectations $\mathbb{E}[A_n]$ and $\mathbb{E}[A_n^2]$ can be calculated by $\mathbb{E}[A_n]=\frac{1}{\xi _{i,j}}$ and $\mathbb{E}[A_n^2]=\frac{2}{\xi _{i,j}^2}$, respectively.

Besides, the system time $T_n^{\rm{l}}$ of the $n$-th computation-intensive task for local computing follows an exponential distribution with parameter $(\mu^{\rm{l}}_{i,j}-(1-\beta_{i,j})\xi_{i,j})$ \cite{ref22}. As a result, the probability density function (PDF) $f_{T_n^{{\rm{l}}}}\left(t\right)$ of the system time $T_n^{\rm{l}}$ can be expressed by 
\begin{equation}\label{pdf_tnl}
    f_{T_n^{{\rm{l}}}}\left(t\right) \!=\! \left( {\mu _{i,j}^{\rm{l}} \!-\! \left( {1 \!-\! {\beta _{i,j}}} \right){\xi _{i,j}}} \right){e^{ - \left( {\mu _{i,j}^{\rm{l}} \!-\! \left( {1 \!-\! {\beta _{i,j}}} \right){\xi _{i,j}}} \right)t}}.
\end{equation}
The system time $T_n^{\rm{t,e}}$ of the $n$-th computation-intensive task for remote computing consists of the offloading delay $T_n^{\rm{t}}$ from the UE to its associated BS and the system time $T_n^{\rm{e}}$ for computing the task at the edge server, which can be expressed by
\begin{equation}\label{tt}
    T_n^{\rm{t,e}}=T_n^{\rm{t}}+T_n^{\rm{e}}.
\end{equation}
Therefore, the PDF ${f_{T_n^{\rm{t,e}}}}\left( t \right)$ of $T_n^{\rm{t,e}}$ can be calculated by
\begin{equation}
	\begin{aligned}\label{eqn-17}
		{f_{T_n^{\rm{t,e}}}}\left( t \right) &= {f_{T_n^{\rm{t}}}}\left( t \right) \otimes {f_{T_n^{\rm{e}}}}\left( t \right)\\
		&= \eta_{i,j}\left( {{e^{ - \left( {\mu _{i,j}^{\rm{t}} \!\!-\! {\beta _{i,j}}{\xi _{i,j}}} \right)t}}} \right.
		\left. { - {e^{ - \left( {\mu _{i,j}^{\rm{e}} \!\!-\! {\beta _{i,j}}{\xi _{i,j}}} \right)t}}} \right),
	\end{aligned}
\end{equation}
where $\otimes$ is the  convolution operation. ${f_{T_n^{\rm{t}}}}\left( t \right)$ and ${f_{T_n^{\rm{e}}}}\left( t \right)$ are the PDFs of $T_n^{\rm{t}}$ and $T_n^{\rm{e}}$, which follow exponential distributions with parameters $(\mu^{\rm{t}}_{i,j}-\beta_{i,j}\xi_{i,j})$ and $(\mu^{\rm{e}}_{i,j}-\beta_{i,j}\xi_{i,j})$, respectively, similar to \eqref{pdf_tnl}. Additionally, $\eta_{i,j} = \frac{{\left( {\mu _{i,j}^{\rm{t}} \!\!-\! {\beta _{i,j}}{\xi _{i,j}}} \right)\left( {\mu _{i,j}^{\rm{e}} \!\!-\! {\beta _{i,j}}{\xi _{i,j}}} \right)}}{{\mu _{i,j}^{\rm{e}} \!\!-\! \mu _{i,j}^{\rm{t}}}}$. As a result, the probability $ {\rm{P}}\left(E_n^{\rm{l}}\right)$ in \eqref{eqn-9} can be calculated by
\begin{equation}
    \begin{aligned}\label{eq_el}
        {\rm{P}}\left(E_n^{\rm{l}}\right) &= \int_0^\infty  {f_{T_n^{\rm{l}}}}\left( t \right){{\rm{P}}\left( {t > T_n^{\rm{t,e}}} \right)}{\rm{d}}t \\
        &= \int_0^\infty  {\int_0^t {f_{T_n^{\rm{l}}}}\left( t \right){{f_{T_n^{\rm{t,e}}}}\left( m \right)} } {\rm{d}}m{\rm{d}}t\\
        &=\frac{{\left( {\mu _{i,j}^{\rm{t}} \!\!-\! {\beta _{i,j}}{\xi _{i,j}}} \right)\left( {\mu _{i,j}^{\rm{e}} \!\!-\! {\beta _{i,j}}{\xi _{i,j}}} \right)}}{{\left( {\mu _{i,j}^{\rm{t}} \!\!+\! \mu _{i,j}^{\rm{l}} \!\!-\! {\xi _{i,j}}} \right)\left( {\mu _{i,j}^{\rm{e}} \!\!+\! \mu _{i,j}^{\rm{l}} \!\!-\! {\xi _{i,j}}} \right)}}.
    \end{aligned}
\end{equation}
Since $E_n^{\rm{t,e}}$ is a complementary event to the event $E_n^{\rm{l}}$, the probability ${\rm{P}}\left(E_n^{\rm{t,e}}\right)$ in \eqref{eqn-9} can be calculated by
\begin{equation}\label{PEM}
	\begin{aligned}
		&{\rm{P}}\left(E_n^{\rm{t,e}}\right)=1-{\rm{P}}\left(E_n^{\rm{l}}\right)\\
		&=\frac{\left(\mu_{i,j}^{\rm{l}}\!-\!\left(1-\beta_{i,j}\right)\xi_{i,j}\right)\left(\mu_{i,j}^{\rm{l}}\!+\!\mu_{i,j}^{\rm{t}}\!+\!\mu_{i,j}^{\rm{e}}\!-\!\left(1\!-\!\beta_{i,j}\right)\xi_{i,j}\right)}{{\left( {\mu _{i,j}^{\rm{l}} \!\!+\! \mu _{i,j}^{\rm{t}} \!\!-\! {\xi _{i,j}}} \right)\left( {\mu _{i,j}^{\rm{l}} \!\!+\! \mu _{i,j}^{\rm{e}} \!\!-\! {\xi _{i,j}}} \right)}}.
	\end{aligned}
\end{equation}

Additionally, it is challenging to directly calculate the expectation $\mathbb{E}\left[ {A_n^{\rm{l}}T_n^{\rm{l}}| E_n^{\rm{l}}} \right]$. Specifically, the system time $T_n^{\rm{l}}$ of the $n$-th computation-intensive task for local computing is related to its arrival interval $A_n^{\rm{l}}$ at the local processor of UE. A larger arrival interval can shorten the system time due to a shorter waiting time. Conversely, a smaller arrival interval leads to a longer system time due to tasks accumulating in the queue, which increases the waiting time. In other words, the arrival interval is related to the waiting time and is independent of the service time. Therefore, the system time $T_n^{\rm{l}}$ of the $n$-th computation-intensive task for local computing can be regarded as the sum of its waiting time $W_n^{\rm{l}}$ and its service time $S_n^{\rm{l}}$, which can be expressed by 
\begin{equation}\label{eq31}
    T_n^{\rm{l}} = S_n^{\rm{l}}+W_n^{\rm{l}}.
\end{equation}
If the $n$-th computation-intensive task arrives at the processing queue of the local processor while the $\left(n-1\right)$-th task is still waiting or served in the queue, then $W_n^{\rm{l}}=T_{n-1}^{\rm{l}}-A_n^{\rm{l}}$. Otherwise, $W_n^{\rm{l}}=0$. Consequently, the waiting time $W_n^{\rm{l}}$ of the $n$-th computation-intensive task for local computing can be expressed by
\begin{equation}\label{eqn-wnl}
	W_n^{\rm{l}} = \left\{
	\begin{aligned}
		&T_{n-1}^{\rm{l}}-A_n^{\rm{l}}, &T_{n-1}^{\rm{l}}>A_n^{\rm{l}}\\
		&0, &T_{n-1}^{\rm{l}}<A_n^{\rm{l}}.
	\end{aligned}
	\right.
\end{equation} 
Therefore, according to the law of total expectation \cite{ref33}, the conditional expectation $\mathbb{E}\left[A_n^{\rm{l}}T_n^{\rm{l}}|E_n^{\rm{l}}\right]$ can be transformed as 
\begin{equation}\label{eqn-el}
\begin{aligned}
	&\mathbb{E}\left[A_n^{\rm{l}}T_n^{\rm{l}}|E_n^{\rm{l}}\right]=\mathbb{E}\left[A_n^{\rm{l}}\left(W_n^{\rm{l}}+S_n^{\rm{l}}\right)|E_n^{\rm{l}}\right]\\
    &= {\rm{P}}\left( {T_{n - 1}^{\rm{l}} \!>\! A_n^{\rm{l}}} \right)\left\{\mathbb{E} \left[ {A_n^{\rm{l}}{T_{n - 1}^{\rm{l}}} |E_n^{\rm{l}},T_{n - 1}^{\rm{l}} \!>\! A_n^{\rm{l}}} \right]\right.\\
    &\left.-\mathbb{E} \left[ ({A_n^{\rm{l}})^2 |E_n^{\rm{l}},T_{n - 1}^{\rm{l}} \!>\! A_n^{\rm{l}}} \right]\right\}\\
    &+\mathbb{E}\left[ {S_n^{\rm{l}}|E_n^{\rm{l}}}\right]\left\{ {\rm{P}}\left( {T_{n - 1}^{\rm{l}} \!>\! A_n^{\rm{l}}} \right)\mathbb{E}\left[ {A_n^{\rm{l}}|E_n^{\rm{l}},T_{n - 1}^{\rm{l}} \!>\! A_n^{\rm{l}}} \right]\right.\\
    &\left. + {\rm{P}}\left( {T_{n - 1}^{\rm{l}} \!<\! A_n^{\rm{l}}} \right)\mathbb{E}\left[ {A_n^{\rm{l}}|E_n^{\rm{l}},T_{n - 1}^{\rm{l}} \!<\! A_n^{\rm{l}}} \right] \right\} \triangleq \Xi_{i,j}^{1}.
\end{aligned}
\end{equation}
As $T_{n-1}^{\rm{l}}$ and $A_n^{\rm{l}}$ are independent, the probability ${\rm{P}}(T_{n-1}^{\rm{l}}>A_n^{\rm{l}})$ in \eqref{eqn-el} can be derived by
\begin{equation}\label{eqn-w}
	\begin{aligned}
		&{\rm{P}}\left( {T_{n - 1}^{\rm{l}} > A_n^{\rm{l}}} \right) = \int_0^\infty  {{f_{T_{n - 1}^{\rm{l}}}}\left( t \right){\rm{P}}\left( {t > A_n^{\rm{l}}} \right)} {\rm{d}}t\\
		&= \int_0^\infty  {\int_0^t {{f_{T_{n - 1}^{\rm{l}}}}\left( t \right){f_{A_n^{\rm{l}}}}\left( x \right)} } {\rm{d}}x{\rm{d}}t = \frac{{\left( {1 - {\beta _{i,j}}} \right){\xi _{i,j}}}}{{\mu _{i,j}^{\rm{l}}}},
	\end{aligned}
\end{equation}
where the PDF ${f_{T_{n - 1}^{\rm{l}}}}\left( t \right)$ is given in \eqref{pdf_tnl}, and the PDF $f_{A_n^{\rm{l}}}(x)$ of arrival interval $A_n^{\rm{l}}$ at the local processor of UE is expressed by $f_{A_n^{\rm{l}}}\left( x \right)=(1-\beta_{i,j})\xi_{i,j}e^{-(1-\beta_{i,j})\xi_{i,j}x}$. As the event $({T_{n - 1}^{\rm{l}} < A_n^{\rm{l}}})$ is a complementary event to the event $({T_{n - 1}^{\rm{l}} < A_n^{\rm{l}}})$, the probability ${\rm{P}}(T_{n-1}^{\rm{l}}>A_n^{\rm{l}})$ in \eqref{eqn-el} can be calculated by
\begin{equation}\label{eqn_36}
    {\rm{P}}\left( {T_{n - 1}^{\rm{l}} \!<\! A_n^{\rm{l}}} \right) \!=\! 1\!-\!{\rm{P}}\left( {T_{n - 1}^{\rm{l}} \!>\! A_n^{\rm{l}}} \right)\!=\!1\!-\!\frac{{\left( {1 - {\beta _{i,j}}} \right){\xi _{i,j}}}}{{\mu _{i,j}^{\rm{l}}}}.
\end{equation}

Besides, based on the equations \eqref{eq31} and \eqref{eqn-wnl}, the term $\mathbb{E} \left[ {A_n^{\rm{l}}{T_{n - 1}^{\rm{l}}} |E_n^{\rm{l}},T_{n - 1}^{\rm{l}} \!>\! A_n^{\rm{l}}} \right]$ in \eqref{eqn-el} can be transformed as
\begin{equation}\label{eq35}
    \begin{aligned}
	&\mathbb{E}[A_n^{\rm{l}}T_{n - 1}^{\rm{l}}|E_n^{\rm{l}},T_{n - 1}^{\rm{l}} > A_n^{\rm{l}}] \\
	&=\mathbb{E}[A_n^{\rm{l}}T_{n - 1}^{\rm{l}}|T_{n - 1}^{\rm{l}} - A_n^{\rm{l}} > T_n^{\rm{t,e}} - S_n^{\rm{l}},T_{n - 1}^{\rm{l}} > A_n^{\rm{l}}]\\
	&= {\rm{P}}\left( {{Y_n} > 0} \right)\mathbb{E}[A_n^{\rm{l}}T_{n - 1}^{\rm{l}}|T_{n - 1}^{\rm{l}} > A_n^{\rm{l}} + {Y_n},{Y_n} > 0] \\
	&+ {\rm{P}}\left( {{Y_n} < 0} \right)\mathbb{E}[A_n^{\rm{l}}T_{n - 1}^{\rm{l}}|T_{n - 1}^{\rm{l}} > A_n^{\rm{l}},{Y_n} < 0],
    \end{aligned}
\end{equation}
where $Y_n=T_n^{\rm{t,e}}-S_n^{\rm{l}}$. The PDF of $Y_n$ can be calculated by 
\begin{equation}\label{eqn-yn}
	\begin{aligned}
		&f_{Y_n}\left(y\right) = {f_{T_n^{\rm{t,e}}}}\left( y \right) \otimes {f_{S_n^{\rm{l}}}}\left( y \right)\\
		&=\mu_{i,j}^{\rm{l}}\eta_{i,j}\left\{
		\begin{aligned}
			&\frac{1}{{\Omega _{i,j}^{\rm{l,t} }}}{e^{ - \chi _{i,j}^{\rm{t}}y}} - \frac{1}{{\Omega _{i,j}^{\rm{l,e} }}}{e^{ - \chi _{i,j}^{\rm{e}}y}} &,y>0\\
			&\frac{{\mu _{i,j}^{\rm{e}} - \mu _{i,j}^{\rm{t}}}}{{\Omega _{i,j}^{{\rm{l,t}} }\Omega _{i,j}^{{\rm{l,e}} }}}{e^{\mu _{i,j}^{\rm{l}}y}} &,y<0,
		\end{aligned}
		\right.
	\end{aligned}
\end{equation}
where the PDF $f_{T_n^{\rm{t,e}}}(y)$ is given in \eqref{eqn-17}, and the PDF of $S_n^{\rm{l}}$ can be expressed by $f_{S_n^{\rm{l}}}(y) = \mu_{i,j}^{\rm{l}}e^{\mu_{i,j}^{\rm{l}}y}$. Additionally, $\chi _{i,j}^{\rm{e}}=\mu_{i,j}^{\rm{e}}\!-\!\beta_{i,j}\xi_{i,j}$. As a result, the probability ${\rm{P}}\left( {{Y_n} > 0} \right)$ in \eqref{eq35} can be expressed by  
\begin{equation}\label{eq37}
\begin{aligned}
&{\rm{P}}\left(Y_n>0\right)=\int_0^\infty  {{f_{{Y_n}}}\left( y \right)} {\rm{d}}y\\
&=\frac{{\mu _{i,j}^{\rm{l}}(\mu _{i,j}^{\rm{l}} + \mu _{i,j}^{\rm{t}} + \mu _{i,j}^{\rm{e}} - 2{\beta _{i,j}}{\xi _{i,j}})}}{{\Omega _{i,j}^{\rm{l,t} }\Omega _{i,j}^{\rm{l,e} }}}.
\end{aligned}
\end{equation}
As the event $\left(Y_n<0\right)$ is complementary to $\left(Y_n>0\right)$, the probability ${\rm{P}}\left(Y_n<0\right)$ in \eqref{eq35} can be expressed by
\begin{equation}\label{eq38}
    \begin{aligned}
        &{\rm{P}}\left(Y_n<0\right)=1-{\rm{P}}\left(Y_n>0\right)\\
        &=1-\frac{{\mu _{i,j}^{\rm{l}}(\mu _{i,j}^{\rm{l}} + \mu _{i,j}^{\rm{t}} + \mu _{i,j}^{\rm{e}} - 2{\beta _{i,j}}{\xi _{i,j}})}}{{\Omega _{i,j}^{\rm{l,t} }\Omega _{i,j}^{\rm{l,e} }}}.
    \end{aligned}
\end{equation}

Additionally, since $S_n^{\rm{l}}$ is independent from $A_n^{\rm{l}}$ and $T_{n-1}^{\rm{l}}$, $Y_n$, $A_n^{\rm{l}}$, and $T_{n-1}^{\rm{l}}$ are independent of each other. Therefore, the conditional expectation $\mathbb{E}[A_n^{\rm{l}}T_{n - 1}^{\rm{l}}|T_{n - 1}^{\rm{l}} > A_n^{\rm{l}} + {Y_n},{Y_n} > 0]$ in \eqref{eq35} can be calculated by
\begin{equation}\label{eq39}
\begin{aligned}
&\mathbb{E}[A_n^{\rm{l}}T_{n - 1}^{\rm{l}}|T_{n - 1}^{\rm{l}} > A_n^{\rm{l}} + {Y_n},{Y_n} > 0]\\
&=\frac{\mathbb{E}{[A_n^{\rm{l}}T_{n - 1}^{\rm{l}},T_{n - 1}^{\rm{l}} > A_n^{\rm{l}} + {Y_n},{Y_n} > 0]}}{{{\rm{P}}\left( {T_{n - 1}^{\rm{l}} > A_n^{\rm{l}} + {Y_n},{Y_n} > 0} \right)}}  \\
&=\frac{{\int_0^\infty  {\int_0^\infty  {\int_{y + x}^\infty  {xt{f_{T_{n - 1}^{\rm{l}}}}(t)} {f_{A_n^{\rm{l}}}}(x){f_{{Y_n}}}(y){\rm{d}}t{\rm{d}}x{\rm{d}}y} } }}{{\int_0^\infty  {\int_0^\infty  {\int_{y + x}^\infty  {{f_{T_{n - 1}^{\rm{l}}}}(t)} {f_{A_n^{\rm{l}}}}(x){f_{{Y_n}}}(y){\rm{d}}t{\rm{d}}x{\rm{d}}y} } }}\\
& = \frac{{\mu _{i,j}^{\rm{l}} + 2\chi _{i,j}^{\rm{l}}}}{{(\mu {{_{i,j}^{\rm{l}}})^2}\chi _{i,j}^{\rm{l}}}} + \frac{{\frac{{\Omega _{i,j}^{\rm{l,e} }\omega _{i,j}^{\rm{e}}}}{{\omega _{i,j}^{\rm{t}}}} - \frac{{\Omega _{i,j}^{\rm{l,t} }\omega _{i,j}^{\rm{t}}}}{{\omega _{i,j}^{\rm{e}}}}}}{{\Omega _{i,j}^{\rm{l,e}}\omega _{i,j}^{\rm{e}} - \Omega _{i,j}^{\rm{l,t} }\omega _{i,j}^{\rm{t}}}}.
\end{aligned}
\end{equation}
Similarly, the conditional expectation $\mathbb{E}[A_n^{\rm{l}}T_{n - 1}^{\rm{l}}|T_{n - 1}^{\rm{l}} > A_n^{\rm{l}},{Y_n} < 0]$ in \eqref{eq35} can be expressed by
\begin{equation}\label{eq40}
\begin{aligned}
&\mathbb{E}[A_n^{\rm{l}}T_{n - 1}^{\rm{l}}|T_{n - 1}^{\rm{l}}\! >\! A_n^{\rm{l}},{Y_n} \!<\! 0]\!=\!\mathbb{E}[A_n^{\rm{l}}T_{n - 1}^{\rm{l}}|T_{n - 1}^{\rm{l}} \!>\! A_n^{\rm{l}}]\\
&=\frac{{\int_0^\infty  {\int_0^{\rm{t}} {xt{f_{T_{n - 1}^{\rm{l}}}}(t)} {f_{A_n^{\rm{l}}}}(x){\rm{d}}x{\rm{d}}t} }}{{{\rm{P}}\left( {T_{n - 1}^{\rm{l}} > A_n^{\rm{l}}} \right)}}=\frac{{\mu _{i,j}^{\rm{l}} + 2\chi _{i,j}^{\rm{l}}}}{{(\mu {{_{i,j}^{\rm{l}}})^2}\chi _{i,j}^{\rm{l}}}}.
\end{aligned}
\end{equation}

By incorporating $\eqref{eq37}$-$\eqref{eq40}$ into $\eqref{eq35}$, we can obtain the result of $\mathbb{E} \left[ {A_n^{\rm{l}}{T_{n - 1}^{\rm{l}}}|E_n^{\rm{l}},T_{n - 1}^{\rm{l}} \!>\! A_n^{\rm{l}}} \right]$ in \eqref{eqn-el}. Similar to the derivation steps in \eqref{eq35}-\eqref{eq40}, the term $\mathbb{E}[ {{(A_n^{\rm{l}}})^2 |E_n^{\rm{l}},T_{n - 1}^{\rm{l}} \!>\! A_n^{\rm{l}}}]$ in \eqref{eqn-el} can be obtained.

By substituting \eqref{eq31}, the event $E_n^{\rm{l}}$ can be transformed as $S_n^{\rm{l}}>X_n$, where $X_n=T_n^{\rm{t,e}}-W_n^{\rm{l}}$. The PDF of $X_n$ can be calculated by
\begin{equation}
	\begin{aligned}
		&{f_{{X_n}}}\left( x \right) = {f_{T_n^{\rm{t,e}}}}\left( x \right) \otimes {f_{W_n^{\rm{l}}}}\left( x \right) \\
		&= \eta _{i,j}\rho _{i,j}^{\rm{l}}\left\{
		\begin{aligned}
			&{\frac{{\Omega _{i,j}^{{\rm{l,t}} }}}{{\omega _{i,j}^{\rm{t}}}}{e^{ - \chi_{i,j}^{\rm{t}}x}} - \frac{{\Omega _{i,j}^{{\rm{l,e}}}}}{{\omega _{i,j}^{{\rm{e}}}}}{e^{ - \chi_{i,j}^{\rm{e}}x}}} &, x>0\\
			&\frac{{\chi _{i,j}^{{\rm{l}}}\left( {\mu _{i,j}^{\rm{e}} - \mu _{i,j}^{\rm{t}}} \right)}}{{\omega _{i,j}^{\rm{t}}\omega _{i,j}^{{\rm{e}}}}}{e^{\chi _{i,j}^{{\rm{l}}}x}}&, x<0,
		\end{aligned}
		\right.
	\end{aligned}
\end{equation}
where the PDF ${f_{T_n^{\rm{t,e}}}}\left( x \right)$ is given in \eqref{eqn-17}. The PDF of $W_n^{\rm{l}}$ is expressed by ${f_{W_n^{\rm{l}}}}\left( x \right) = \left( {1 - \rho _{i,j}^{\rm{l}}} \right)\delta \left( x \right) + \chi _{i,j}^{\rm{l}}\rho _{i,j}^{\rm{l}}{e^{ - \chi _{i,j}^{\rm{l}}x}}$, where $\rho _{i,j}^{\rm{l}}=\frac{(1-\beta_{i,j})\xi_{i,j}}{\mu_{i,j}^{\rm{l}}}$ and $\delta \left( x \right)$ is an impulse function \cite{ref29}. Therefore, the term $\mathbb{E}\left[ {S_n^{\rm{l}}|E_n^{\rm{l}}} \right]$ in \eqref{eqn-el} is calculated by
\begin{equation}\label{eqn-es}
	\begin{aligned}
		&\mathbb{E}\left[ {S_n^{\rm{l}}|E_n^{\rm{l}}} \right] = \mathbb{E}\left[ {S_n^{\rm{l}}|S_n^{\rm{l}} > {X_n}} \right] = \int_0^\infty  s{f_{S_n^{\rm{l}}|S_n^{\rm{l}} > {X_n}}}\left( s \right){\rm{d}}s\\
		&=\int_0^\infty  s\frac{{{\rm{d}}{\rm{P}}\left( {S_n^{\rm{l}} < s,S_n^{\rm{l}} > {X_n}} \right)}}{{{\rm{P}}\left( {S_n^{\rm{l}} > {X_n}} \right){\rm{d}}s}}{\rm{d}}s\\
  &=\int_0^\infty\left\{ \frac{s}{{{{\rm{P}}\left( {E_n^{\rm{l}}} \right){\rm{d}}s}}}{\rm{d}}\left[ \int_{ - \infty }^0 {\int_0^s {{f_{S_n^{\rm{l}}}}\left( m \right){f_{{X_n}}}\left( x \right){\rm{d}}m{\rm{d}}x} } \right.\right.\\
  &\left.\left. + \int_0^\infty  {\int_x^s {{f_{S_n^{\rm{l}}}}\left( m \right){f_{{X_n}}}\left( x \right){\rm{d}}m{\rm{d}}x} }  \right]\right\}{\rm{d}}s\\
  & =\frac{1}{{\mu _{i,j}^{\rm{l}}}} + \frac{{\chi _{i,j}^{\rm{l}}\left( {\Omega _{i,j}^{{\rm{l,t}} } + \omega _{i,j}^{\rm{l,e}}} \right)}}{{\mu _{i,j}^{\rm{l}}\Omega _{i,j}^{\rm{l,t} }\Omega _{i,j}^{\rm{l,e}}}}.
	\end{aligned}
\end{equation}

Besides, the expectation $\mathbb{E}\left[ {A_n^{\rm{l}}|E_n^{\rm{l}},T_{n - 1}^{\rm{l}} \!>\! A_n^{\rm{l}}} \right]$ in \eqref{eqn-el} can be derived by
\begin{equation}\label{eqn-ea1}
\begin{aligned}
&\mathbb{E}\left[ {A_n^{\rm{l}}|E_n^{\rm{l}},T_{n - 1}^{\rm{l}} > A_n^{\rm{l}}} \right] = \int_0^\infty  {a{f_{A_n^{\rm{l}}|E_n^{\rm{l}},T_{n - 1}^{\rm{l}} > A_n^{\rm{l}}}}\left( a \right)} {\rm{d}}a\\
&= \!\int_0^\infty  {a\frac{{{\rm{d}}{\rm{P}}\left( {A_n^{\rm{l}} < a,E_n^{\rm{l}},T_{n - 1}^{\rm{l}} > A_n^{\rm{l}}} \right)}}{{{\rm{P}}\left( {E_n^{\rm{l}},T_{n - 1}^{\rm{l}} > A_n^{\rm{l}}} \right){\rm{d}}a}}} {\rm{d}}a\\
& = \!\int_0^\infty  \!{a\frac{{{\rm{d}}{\rm{P}}\left( {A_n^{\rm{l}} \!<\! a,A_n^{\rm{l}} \!<\! T_{n - 1}^{\rm{l}} \!-\! T_n^{{\rm{t,e}}} \!+\! S_n^{\rm{l}},T_{n - 1}^{\rm{l}} \!>\! A_n^{\rm{l}}} \right)}}{{{\rm{P}}\left( {A_n^{\rm{l}} < T_{n - 1}^{\rm{l}} - {Y_n},T_{n - 1}^{\rm{l}} > A_n^{\rm{l}}} \right){\rm{d}}a}}} {\rm{d}}a,
\end{aligned}
\end{equation}
where the probability ${\rm{P}}\left( {A_n^{\rm{l}} < T_{n - 1}^{\rm{l}} - {Y_n},T_{n - 1}^{\rm{l}} > A_n^{\rm{l}}} \right)$ in \eqref{eqn-ea1} can be expressed by
\begin{small}
\begin{equation}\label{eqn-f1_1}
	\begin{aligned}
		&{\rm{P}}\left( {A_n^{\rm{l}} < T_{n - 1}^{\rm{l}} - {Y_n},T_{n - 1}^{\rm{l}} > A_n^{\rm{l}}} \right)\\
		&=\int_0^\infty\!\!\!\int_y^\infty  {{f_{W_n^{\rm{l}}}}} \left( x \right){f_{{Y_n}}}(y){\rm{d}}w{\rm{d}}y + \!\!\int_{\!-\! \infty }^0\!\int_0^\infty  {{f_{W_n^{\rm{l}}}}} \left( x \right){f_{{Y_n}}}(y){\rm{d}}x{\rm{d}}y\\
		&=\frac{{{\eta _{i,j}}\left( {1 - {\beta _{i,j}}} \right){\xi _{i,j}}}}{{\Omega _{i,j}^{{\rm{l,t}} }\Omega _{i,j}^{{\rm{l,e}} }}}\left( {\frac{{\Omega _{i,j}^{{\rm{l,e}} }}}{{\omega _{i,j}^{{\rm{t}}}}} - \frac{{\Omega _{i,j}^{{\rm{l,t}} }}}{{\omega _{i,j}^{{\rm{e}}}}} + \frac{{\mu _{i,j}^{\rm{e}} - \mu _{i,j}^{\rm{t}}}}{{\mu _{i,j}^{\rm{l}}}}} \right).
	\end{aligned}
\end{equation}
\end{small}%

The probability ${\rm{P}}( {A_n^{\rm{l}} \!\!<\! a,\!A_n^{\rm{l}} \!\!<\! T_{n - 1}^{\rm{l}} \!\!-\! T_n^{{\rm{t,e}}} \!\!+\! S_n^{\rm{l}},\!T_{n - 1}^{\rm{l}} \!\!>\! A_n^{\rm{l}}})$ in \eqref{eqn-ea1} can be expressed by
\begin{small}
\begin{equation}\label{eqn-f1_2}
	\begin{aligned}
		&{\rm{P}}\left( {A_n^{\rm{l}} \!<\! a,A_n^{\rm{l}} \!<\! T_{n - 1}^{\rm{l}} \!-\! T_n^{{\rm{t,e}}} \!+\! S_n^{\rm{l}},T_{n - 1}^{\rm{l}} \!>\! A_n^{\rm{l}}} \right)\\
		&={\rm{P}}\left( {A_n^{\rm{l}} < a,A_n^{\rm{l}} < T_{n - 1}^{\rm{l}} - {Y_n},T_{n - 1}^{\rm{l}} > A_n^{\rm{l}}} \right)\\
		&=\int_{ - \infty }^0  \int_0^a {\int_x^\infty  {{f_{T_{n - 1}^{\rm{l}}}}\left( t \right)} {f_{A_n^{\rm{l}}}}\left( x \right)} {f_{Y_n}}\left( y \right){\rm{d}}t{\rm{d}}x{\rm{d}}y\\
		&+\int_0^\infty  {\int_0^a {\int_{x + y}^\infty  {{f_{T_{n - 1}^{\rm{l}}}}\left( t \right)} {f_{A_n^{\rm{l}}}}\left( x \right)} {f_{Y_n}}\left( y \right){\rm{d}}t{\rm{d}}x{\rm{d}}y} \\
		&\!=\!\frac{{{\eta _{i,j}}\!\left(\! {1\! - \!{\beta _{i,j}}} \right){\xi _{i,j}}\!\left(\! {1 \!- \!{e^{ - \mu _{i,j}^{\rm{l}}a}}} \right)}}{{\Omega _{i,j}^{{\rm{l,t}}}\Omega _{i,j}^{{\rm{l,e}}}}} \left( {\frac{{\Omega _{i,j}^{{\rm{l,e}}}}}{{\omega _{i,j}^{\rm{t}}}}\! - \!\frac{{\Omega _{i,j}^{{\rm{l,t}}}}}{{\omega _{i,j}^{\rm{e}}}} } \!+\! \frac{{\mu _{i,j}^{\rm{e}} \!-\!\mu _{i,j}^{\rm{t}}}}{{\mu _{i,j}^{\rm{l}}}}\!\right).
	\end{aligned}
\end{equation}
\end{small}%
By incorporating $\eqref{eqn-f1_1}$ and $\eqref{eqn-f1_2}$ into $\eqref{eqn-ea1}$, the expectation $\mathbb{E}\left[ {A_n^{\rm{l}}|E_n^{\rm{l}},T_{n - 1}^{\rm{l}} > A_n^{\rm{l}}} \right]$ in \eqref{eqn-el} can be expressed by
\begin{equation}\label{eqn-a|e1_1}
	\begin{aligned}
		\mathbb{E}\left[ {A_n^{\rm{l}}|E_n^{\rm{l}},T_{n - 1}^{\rm{l}} > A_n^{\rm{l}}} \right]=\frac{1}{\mu_{i,j}^{\rm{l}}}.
	\end{aligned}
\end{equation}

According to equations \eqref{eq31} and \eqref{eqn-wnl}, when the event $(T_{n-1}^{\rm{l}}<A_n^{\rm{l}})$ occurs, the event $E_n^{\rm{l}}$ can be transformed as $(S_n^{\rm{l}}>T_{n}^{\rm{t,e}})$, which is independent of $A_n^{\rm{l}}$. Therefore, the conditional expectation $\mathbb{E}\left[ {A_n^{\rm{l}}|E_n^{\rm{l}},T_{n - 1}^{\rm{l}} \!<\! A_n^{\rm{l}}} \right]$ in \eqref{eqn-el} can be calculated by
\begin{equation}\label{eqn-a|e1_2}
	\begin{aligned}
		&\mathbb{E}\left[ {A_n^{\rm{l}}|E_n^{\rm{l}},T_{n - 1}^{\rm{l}} \!<\! A_n^{\rm{l}}} \right] = \mathbb{E}\left[ {A_n^{\rm{l}}|T_{n - 1}^{\rm{l}} \!<\! A_n^{\rm{l}}} \right]\\
		&=\frac{{\int_0^\infty  {\int_0^x {{f_{T_{n - 1}^{\rm{l}}}}(t){f_{A_n^{\rm{l}}}}(x){\rm{d}}t{\rm{d}}x} } }}{{{\rm{P}}\left( {T_{n - 1}^{\rm{l}} < A_n^{\rm{l}}} \right)}}=\frac{{\mu _{i,j}^{\rm{l}} + (1 - {\beta _{i,j}}){\xi _{i,j}}}}{{\mu _{i,j}^{\rm{l}}(1 - {\beta _{i,j}}){\xi _{i,j}}}},
	\end{aligned}
\end{equation}

By incorporating \eqref{eqn-w}, \eqref{eqn_36} \eqref{eq35}, \eqref{eqn-es}, \eqref{eqn-a|e1_1}, and \eqref{eqn-a|e1_2} into \eqref{eqn-el}, the equation \eqref{eqn-ATL} can be obtained in Theorem \ref{theorem1}.

According to \eqref{tt}, the the expectation $\mathbb{E}\left[ {A_n^{\rm{t}} {T_n^{\rm{t,e}}}| E_n^{\rm{t,e}}} \right]$ can be expressed by
\begin{equation}\label{eatm}
\mathbb{E}\left[A_n^{\rm{t}}T_n^{\rm{t,e}}|E_n^{\rm{t,e}}\right]=\mathbb{E}\left[A_n^{\rm{t}}T_n^{\rm{t}}|E_n^{\rm{t,e}}\right]+\mathbb{E}\left[A_n^{\rm{t}}T_n^{\rm{e}}|E_n^{\rm{t,e}}\right],
\end{equation}
where the result of $\mathbb{E}\left[A_n^{\rm{t}}T_n^{\rm{t}}|E_n^{\rm{t,e}}\right]$ can be obtained similar to the derivation steps in \eqref{eq31}-\eqref{eqn-a|e1_2}, and thus is omitted. The result of $\mathbb{E}\left[A_n^{\rm{t}}T_n^{\rm{t}}|E_n^{\rm{t,e}}\right]$, defined by $\Xi_{i,j}^{2}$, is given in $\eqref{eqn-ATT}$ of Theorem \ref{theorem1}. Additionally, the derivation of $\mathbb{E}\left[A_n^{\rm{t}}T_n^{\rm{e}}|E_n^{\rm{t,e}}\right]$ is given as follows. Similarly, the system time $T_n^{\rm{e}}$ for computing the task at the edge server can be expressed by $T_n^{\rm{e}}=S_n^{\rm{e}}+W_n^{\rm{e}}$, where $S_n^{\rm{e}}$ and $W_n^{\rm{e}}$ are the service time and the waiting time of the $n$-th computation-intensive task at the edge server of BS, respectively. The waiting time $W_n^{\rm{e}}$ can be expressed by
\begin{equation}\label{wnr}
 		W_n^{\rm{e}} = \left\{
 		\begin{aligned}
 			&T_{n-1}^{\rm{e}}-A_n^{\rm{e}}, &T_{n-1}^{\rm{e}}>A_n^{\rm{e}}\\
 			&0,&T_{n-1}^{\rm{e}}<A_n^{\rm{e}},
 		\end{aligned}
 		\right.
\end{equation}
where $A_n^{\rm{e}}$ denotes the arrival interval between the $(n-1)$-th and $n$-th computation-intensive tasks at the edge server of BS. When the $n$-th task arrives at the transmitting queue of the transmitter, the $(n-1)$-th task is still waiting or served in the transmitting queue. Consequently, the arrival interval $A_n^{\rm{e}}$ between the $(n-1)$-th and $n$-th tasks at the edge server is equal to the server time $S_n^{\rm{t}}$ of the $n$-th task in the transmitting queue of the transmitter. Conversely, if the transmitting queue of the transmitter is idle when the $n$-th task arrives, the arrival interval $A_n^{\rm{e}}$ consists of both the service time $S_n^{\rm{t}}$ and the interval between leaving time of the $(n-1)$-th task and arriving time of the $n$-th task in the transmitting queue. Consequently, the arrival interval $A_n^{\rm{e}}$ can be expressed by
 	\begin{equation}\label{anr}
 		A_n^{\rm{e}} = \left\{
 		\begin{aligned}
 			&A_n^{\rm{t}}-T_{n-1}^{\rm{t}}+S_n^{\rm{t}}, &T_{n-1}^{\rm{t}}<A_n^{\rm{t}}\\
 			&S_n^{\rm{t}},&T_{n-1}^{\rm{t}}>A_n^{\rm{t}}.
 		\end{aligned}
 		\right.
 	\end{equation}
Consequently, $W_n^{\rm{e}}$ can be rewritten as
\begin{equation}\label{wnr_2}
	W_n^{\rm{e}}\! = \!\left\{\!
		\begin{aligned}
			&T_{n-1}^{\rm{e}}\!-\!A_n^{\rm{t}}\!+\!T_{n-1}^{\rm{t}}\!-\!S_n^{\rm{t}}, 		
			\!\!\!&T_{n-1}^{\rm{t}}\!<\!A_n^{\rm{t}} \,,\, T_{n-1}^{\rm{e}}\!>\!A_n^{\rm{e}}\\
			&T_{n-1}^{\rm{e}}\!-\!S_n^{\rm{t}},\!\!\!&T_{n-1}^{\rm{t}}\!>\!A_n^{\rm{t}}\, ,\, T_{n-1}^{\rm{e}}\!>\!A_n^{\rm{e}}\\
			&0,\!\!\!&\text{otherwise}.
		\end{aligned}
		\right.
\end{equation}
As a result, the expectation $\mathbb{E}\left[A_n^{\rm{t}}T_n^{\rm{e}}|E_n^{\rm{t,e}}\right]$ in \eqref{eatm} can be expressed by
\begin{equation}\label{eq56}
    \begin{aligned}
    	&\mathbb{E}\left[A_n^{\rm{t}}T_n^{\rm{e}}|E_n^{\rm{t,e}}\right]=\mathbb{E}\left[ {A_n^{\rm{t}}S_n^{\rm{e}}|E_n^{\rm{t,e}}} \right] + \mathbb{E}\left[ {A_n^{\rm{t}}W_n^{\rm{e}}|E_n^{\rm{t,e}}} \right]\\
    	&=\mathbb{E}\left[ {A_n^{\rm{t}}S_n^{\rm{e}}|E_n^{\rm{t,e}}} \right]+{\rm{P}}\left( {{J_n}} \right)\mathbb{E}\left[ {A_n^{\rm{t}}W_n^{\rm{e}}|T_n^{\rm{t,e}} > T_n^{\rm{l}},{J_n}} \right]\\
    	& + {\rm{P}}\left( {{I_n}} \right)\mathbb{E}\left[ {A_n^{\rm{t}}W_n^{\rm{e}}|T_n^{\rm{t,e}} > T_n^{\rm{l}},{I_n}} \right],
    \end{aligned}
\end{equation}
where $J_n$ and $I_n$ denote the events $\left(T_{n-1}^{\rm{t}}>A_n^{\rm{t}} \,,\, T_{n-1}^{\rm{e}}>A_n^{\rm{e}}\right)$ and $\left(T_{n-1}^{\rm{t}}<A_n^{\rm{t}} \,,\, T_{n-1}^{\rm{e}}>A_n^{\rm{e}}\right)$ in \eqref{wnr_2}, respectively, for denotational simplicity. The result of $\mathbb{E}\left[A_n^{\rm{t}}T_n^{\rm{e}}|E_n^{\rm{t,e}}\right]$ can be obtained similar to the derivation of $\mathbb{E}\left[A_n^{\rm{l}}T_n^{\rm{l}}|E_n^{\rm{l}}\right]$ in \eqref{eqn-el}. For denotational simplicity, the results of $\mathbb{E}\left[ {A_n^{\rm{t}}S_n^{\rm{e}}|E_n^{\rm{t,e}}} \right]$, $\mathbb{E}\left[ {A_n^{\rm{t}}W_n^{\rm{e}}|T_n^{\rm{t,e}} > T_n^{\rm{l}},{B_n}} \right]$, and $\mathbb{E}\left[ {A_n^{\rm{t}}W_n^{\rm{e}}|T_n^{\rm{t,e}} > T_n^{\rm{l}},{I_n}} \right]$ are defined as $\Xi_{i,j}^{3}$, $\Xi_{i,j}^{\rm{AWR1}}$, and $\Xi_{i,j}^{\rm{AWR2}}$ in $\eqref{eqn-ASR}$-$\eqref{eqn-AWR2}$ of Theorem \ref{theorem1}, respectively. 

As a result, by incorporating \eqref{eq_el}, \eqref{PEM}, and $\eqref{eqn-ATL}$-$\eqref{eqn-AWR2}$ into $\eqref{eqn-9}$, the result of MAoI under the partial offloading scheme in Theorem \ref{theorem1}.

\end{appendices}


\begin{thebibliography}{10}
\providecommand{\url}[1]{#1}
\csname url@samestyle\endcsname
\providecommand{\newblock}{\relax}
\providecommand{\bibinfo}[2]{#2}
\providecommand{\BIBentrySTDinterwordspacing}{\spaceskip=0pt\relax}
\providecommand{\BIBentryALTinterwordstretchfactor}{4}
\providecommand{\BIBentryALTinterwordspacing}{\spaceskip=\fontdimen2\font plus
\BIBentryALTinterwordstretchfactor\fontdimen3\font minus
  \fontdimen4\font\relax}
\providecommand{\BIBforeignlanguage}[2]{{%
\expandafter\ifx\csname l@#1\endcsname\relax
\typeout{** WARNING: IEEEtran.bst: No hyphenation pattern has been}%
\typeout{** loaded for the language `#1'. Using the pattern for}%
\typeout{** the default language instead.}%
\else
\language=\csname l@#1\endcsname
\fi
#2}}
\providecommand{\BIBdecl}{\relax}
\BIBdecl

\bibitem{ref1}
S. Kaul, R. Yates, and M. Gruteser, “Real-time status: How often should one update?” in \emph{Proc. IEEE Conf. Comput. Commun. (INFOCOM)}, Mar. 2012, pp. 2731–2735.

\bibitem{ref2}
Y. C. Hu, M. Patel, D. Sabella, N. Sprecher, and V. Young, “Mobile edge computing—A key technology towards 5G,” \emph{ETSI white paper}, vol. 11, no. 11, pp. 1–16, 2015.

\bibitem{ref3}
Y. Chen, Z. Chang, G. Min, S. Mao, and T. H¨am¨al¨ainen, “Joint optimization of sensing and computation for status update in mobile edge computing systems,” \emph{IEEE Trans. Wireless Commun.}, vol. 22, no. 11, pp. 8230–8243, 2023.

\bibitem{ref4}
Y. Zhang, H. Liu, L. Jiao, and X. Fu, “To offload or not to offload: An efficient code partition algorithm for mobile cloud computing,” in \emph{Proc. IEEE Int. Conf. Cloud Netw. (CLOUDNET)}, Nov. 2012, pp. 80–86.

\bibitem{ref5}
Q. Kuang, J. Gong, X. Chen, and X. Ma, “Age-of-information for computation-intensive messages in mobile edge computing,” in \emph{Proc. Int. Conf. Wireless Commun. Signal Process. (WCSP)}, Oct. 2019, pp. 1–6.

\bibitem{ref6}
Q. Kuang, J. Gong, X. Chen, and X. Ma, “Analysis on computation-intensive status update in mobile edge computing,” \emph{IEEE Trans. Veh. Technol.}, vol. 69, no. 4, pp. 4353–4366,
2020.

\bibitem{ref7}
Z. Tang, Z. Sun, N. Yang, and X. Zhou, “Age of information of multi-user mobile edge computing systems,” \emph{IEEE Open J. Commun. Soc.}, vol. 4, pp. 1600–1614, 2023.

\bibitem{ref8}
L. Liu, X. Qin, Y. Tao, and Z. Zhang, “Timely updates in MEC-assisted status update systems: Joint task generation and computation offloading scheme,” \emph{China Commun.}, vol. 17, no. 8, pp. 168–186, 2020.

\bibitem{ref9}
J. Yan, X. Zhao, and Z. Li, “Deep-reinforcement-learning-based computation offloading in UAV-assisted vehicular edge computing networks,” \emph{IEEE Internet Things J.}, vol. 11, no. 11, pp. 19 882–19 897, 2024.

\bibitem{ref10}
M. Sheng, Y. Wang, X. Wang, and J. Li, “Energy-efficient multiuser partial computation offloading with collaboration of terminals, radio access network, and edge server,” \emph{IEEE Trans. Commun.}, vol. 68, no. 3, pp. 1524–1537, 2019.

\bibitem{ref11}
S. Zhang, N. Yi, and Y. Ma, “A survey of computation offloading with task types,” \emph{IEEE Trans. Intell. Transp. Syst.}, vol. 25, no. 8, pp. 8313–8333, 2024.

\bibitem{ref12}
Y. Yang, B. Zhang, D. Guo, W. Wang, J. Nie, Z. Xiong, R. Xu, and X. Zhou, “Stochastic geometry-based age of information performance analysis for privacy preservation-oriented mobile crowdsensing,” \emph{IEEE Trans. Veh. Technol.}, vol. 72, no. 7, pp. 9527–9541, 2023.

\bibitem{ref13}
O. M. Sleem, S. Leng, and A. Yener, “Age of information minimization in wireless powered stochastic energy harvesting networks,” in \emph{Proc. Annu. Conf. Inf. Sci. Syst. (CISS)}, Mar. 2020, pp. 1–6.

\bibitem{ref14}
Y. Li, J. Hou, H. Sun, H. Zhang, and S. Li, “Age of information in uplink cellular-connected UAV networks with Nakagami-m fading channels,” in \emph{Proc. Int. Conf. Wireless Commun. Signal Process. (WCSP)}, Nov. 2023, pp. 455–459.

\bibitem{ref15}
H. Hu, Y. Dong, Y. Jiang, Q. Chen, and J. Zhang, “On the age of information and energy efficiency in cellular IoT networks with data compression,” \emph{IEEE Internet Things J.}, vol. 10, no. 6, pp. 5226–5239, 2022.

\bibitem{ref16}
N. Jiang, S. Yan, Z. Liu, C. Hu, and M. Peng, “Communication and computation assisted sensing information freshness performance analysis in vehicular networks,” in \emph{Proc. IEEE Int. Conf. Commun. Workshops (ICC Workshops)}, May. 2022, pp. 969–974.

\bibitem{ref17}
X. Peng, C. Ren, G. Chen, and L. Qiu, “Unified performance analysis of stochastic clustered cooperative systems with distance-based relay selection,” \emph{IEEE Trans. Wireless Commun.}, vol. 21, no. 8, pp. 6180–6194, 2022.

\bibitem{ref18}
T. Kobayashi and N. Miyoshi, “Uplink cellular network models with Ginibre deployed base stations,” in \emph{Proc. Int. Teletraffic Congr. (ITC)}, Sep. 2014, pp. 1–7.

\bibitem{ref19}
H. Hu, J. Zhang, Y. Jiang, Z. Li, Q. Chen, and J. Zhang, “Computation offloading analysis in clustered fog radio access networks with repulsion,” \emph{IEEE Trans. Veh. Technol.}, vol. 70, no. 10, pp. 10 804–10 819, 2021.

\bibitem{ref20}
H.-B. Kong, P. Wang, D. Niyato, and Y. Cheng, “Modeling and analysis of wireless sensor networks with/without energy harvesting using Ginibre point processes,” \emph{IEEE Trans. Wireless Commun.}, vol. 16, no. 6, pp. 3700–3713, 2017.

\bibitem{ref21}
J. G. Andrews, F. Baccelli, and R. K. Ganti, “A tractable approach to coverage and rate in cellular networks,” \emph{IEEE Trans. Commun.}, vol. 59, no. 11, pp. 3122–3134, 2011.

\bibitem{ref22}
X. Qin, Y. Li, X. Song, N. Ma, C. Huang, and P. Zhang, “Timeliness of information for computation-intensive status updates in task-oriented communications,” \emph{IEEE J. Sel. Areas Commun.}, vol. 41, no. 3, pp. 623–638, 2022.

\bibitem{ref23}
G. Wang, P. Cheng, Z. Chen, B. Vucetic, and Y. Li, “Partial NOMA based online task offloading for multi-layer mobile computing networks,” in \emph{Proc. IEEE Int. Conf. Commun. (ICC)}, Jun. 2024, pp. 3658–3663.

\bibitem{ref24}
H. Hu, Y. Jiang, J. Zhang, Q. Chen, and J. Zhang, “On the performance of clustered fog radio access networks with data compression,” \emph{IEEE Trans. Commun.}, vol. 71, no. 6, pp. 3437–3451, 2023.

\bibitem{ref25}
C. Cicconetti, I. F. Akyildiz, and L. Lenzini, “Bandwidth balancing in multi-channel IEEE 802.16 wireless mesh networks,” in \emph{Proc. IEEE Conf. Comput. Commun. (INFOCOM)}, May. 2007, pp. 2108–2116.

\bibitem{ref26}
M. Emara, H. Elsawy, and G. Bauch, “A spatiotemporal model for peak AoI in uplink IoT networks: Time versus event-triggered traffic,” \emph{IEEE Internet Things J.}, vol. 7, no. 8, pp. 6762–6777, 2020.

\bibitem{ref27}
M. Moltafet, M. Leinonen, and M. Codreanu, “On the age of information in multi-source queueing models,” \emph{IEEE Trans. Commun.}, vol. 68, no. 8, pp. 5003–5017, 2020.

\bibitem{ref28}
J. Chen, H. Wu, R. Li, and P. Jiao, “Green parallel online offloading for DSCI-type tasks in IoT-edge systems,” \emph{IEEE Trans. Ind. Inf.}, vol. 18, no. 11, pp. 7955–7966, 2022.

\bibitem{ref29}
M. Harchol-Balter, \emph{Performance modeling and design of computer systems: Queueing theory in action.} Cambridge, U.K: Cambridge Univ. Press, 2013.

\bibitem{ref30}
P. Burke, “The output of a queuing system,” \emph{Oper. Res.}, vol. 4, no. 6, pp. 699–704, 1956.

\bibitem{ref31}
Z. Liang, Y. Liu, T.-M. Lok, and K. Huang, “Multiuser computation offloading and downloading for edge computing with virtualization,” \emph{IEEE Trans. Wireless Commun.}, vol. 18, no. 9, pp. 4298–4311, 2019.

\bibitem{ref32}
P. R´ev´esz, \emph{The laws of large numbers.} New York, USA: Academic Press, 2014.

\bibitem{ref33}
W. Yun, Z. Lu, K. Feng, X. Jiang, P. Wang, and L. Li, “Two efficient AK-based global reliability sensitivity methods by elaborative combination of Bayes’ theorem and the law of total expectation in the successive intervals without overlapping,” \emph{IEEE Trans. Rel.}, vol. 69, no. 1, pp. 260–276, 2019.

\end{thebibliography}

\vfill

\end{document}